\declaretheorem[name=Observation]{observation}
\declaretheorem[name=Lemma]{lemma}
\declaretheorem[name=Definition]{definition}
\newtheorem{theorem}{Theorem}
\definecolor{lightergray}{gray}{1}
\newcommand{\ehig}{\textsf{EHIG}}
\begin{document}
\author{K.K. Nisha\affiliationmark{1} \and N.S. Narayanaswamy\affiliationmark{1} \and S.M. Dhannya\affiliationmark{2}\thanks{Part of the work was done as a PhD student at Indian Institute of Technology Madras, Chennai, India}}
\affiliation{Indian Institute of Technology Madras, Chennai, India.\\
Sri Sivasubramaniya Nadar College of Engineering, Kalavakkam, Chennai, India}
\title{Exactly Hittable Interval Graphs}

\keywords{Exact Hitting Sets, Interval Graphs, Forbidden structure characterization}

\publicationdata{vol. 25:3 special issue ICGT'22}{2023}{6}{10.46298/dmtcs.10762}{2023-01-03; 2023-01-03; 2023-09-12}{2023-10-31}

\maketitle

\begin{abstract}

Given a set system (also well-known as a hypergraph) $H = \{\mathcal{U},\mathcal{X}\}$, where $\mathcal{U}$ is a set of elements and $\mathcal{X}$ is a set of subsets of $\mathcal{U}$, an exact hitting set $S$ is a subset of $\mathcal{U}$ such that each subset in $\mathcal{X}$ contains exactly one element in $S$.
%Given a set system $\mathcal{X} = \{\mathcal{U},\mathcal{S}\}$, where $\mathcal{U}$ is a set of elements and $\mathcal{S}$ is a set of subsets of $\mathcal{U}$, an exact hitting set $\mathcal{U}'$ is a subset of $\mathcal{U}$ such that each subset in $\mathcal{S}$ contains exactly one element in $\mathcal{U}'$.
 We refer to a set system as \textit{exactly hittable} if it has an exact hitting set. In this paper, we study interval graphs which have intersection models that are exactly hittable. We refer to these interval graphs as \textit{Exactly Hittable Interval Graphs} ($\ehig$). We present a forbidden structure characterization for $\ehig$. We also show that the class of proper interval graphs is a strict subclass of $\ehig$. Finally, we give an algorithm that runs in polynomial time to recognize graphs belonging to the class of $\ehig$.
\end{abstract}

\section{Introduction}\label{sec:GenFramework}

We study classes of simple graphs which are intersection graphs of set systems that have exact hitting sets. In particular, we introduce a class of interval graphs which are intersection graphs of intervals that have exact hitting sets. We refer to this class as 
\textit{Exactly Hittable Interval Graphs} (\textsf{EHIG}). 
We also present an infinite family of forbidden structures for \textsf{EHIG}. In the following, we introduce a setting of exact hitting sets and intersection graphs, before presenting our results. \\

\noindent
\textbf{Exact Hitting Sets: }Set systems are synonymous with hypergraphs. A \textit{hitting set} of a hypergraph $H$ is a subset $T$ of the vertex set of $H$ such that $T$ has at least one vertex from every hyperedge. If every hyperedge has exactly one element from $T$, then $T$ is called an \textit{exact hitting set}. The \textsc{Exact Hitting Set} problem is a well-studied decision problem that aims to find if a given hypergraph has an exact hitting set. It finds applications in combinatorial cryptosystems (\cite{Downey2013}) and computational biology among many others. The \textsc{Exact Hitting Set} problem is the dual of the \textsc{Exact Cover} problem which, in turn, seeks to find a set cover that covers all vertices of a hypergraph such that the number of occurrences each vertex has in the cover is exactly one. Some famous examples of the \textsc{Exact Cover} problem are sudoku, tiling dominoes, and the $n$-queens problem. The \textsc{Exact Cover} problem is a special case of the \textsc{Minimum Membership Set Cover} problem (MMSC) (\cite{Karp1972}). While the classic \textsc{Set Cover} problem seeks to find a set cover of minimum cardinality, MMSC aims to find a set cover that minimizes the maximum number of occurrences each vertex has in the cover. MMSC is known to be $\NP$-complete on arbitrary set systems (\cite{Kuhn2005}). However, for interval hypergraphs, MMSC was shown to be solvable in polynomial time by \cite{Dom2006}. If a hypergraph $H$ has an exact hitting set, we refer to $H$ as an \textit{exactly hittable hypergraph}. \cite{dhannya-NSN} have shown that a conflict-free coloring of a set of intervals is exactly a partition into sets of intervals, such that each set has an exact hitting set. This  motivates the question of characterizing those sets of intervals which have an exact hitting set.   A natural characterization is obtained by writing the hitting set linear program with one constraint per interval.  This system is totally unimodular and thus defines an integer polytope (\cite{Dom2006}). Thus, the intervals have an exact hitting set if and only if the polytope defined by the exact hitting set linear program is non-empty. 
 Further, it is possible to find if the interval hypergraph is exactly hittable in polynomial time (\cite{Dom2006}).  In this work, we consider a related graph theoretic version of this question - can we characterize the class of interval graphs that are the \textit{intersection graphs} (defined in Section \ref{sec:Prelims}) of a set of intervals that have an exact hitting set?  We refer to this class as the class of \textit{Exactly Hittable Interval Graphs} ($\ehig$). \\

\noindent
\textbf{Intersection Graphs: }The theory of graphs and hypergraphs are connected by a very well-studied notion of intersection graphs (\cite{erdos1966representation}). It is well-known that every graph $G$ is an intersection graph of some hypergraph $H$ (\cite{Harary}). $H$ is referred to as an \textit{intersection model} or a \textit{set representation} of $G$ (\cite{Gol2004,Harary}). Interestingly, certain special classes of graphs are characterized by the structure of their intersection models. For instance, \cite{gavril1974intersection} has shown that the class of chordal graphs are the intersection graphs of subtrees of a tree . 
When the hyperedges are restricted to be paths on a tree, the resulting intersection graph class is that of path chordal graphs, which is a proper subclass of the class of chordal graphs (\cite{chalopin2009planar, gavril1978recognition,leveque2009,monma1986}). These characterizations result in recognition algorithms which are very well studied.  The book by \cite{Gol2004} can be considered a pilgrimage for anyone interested in the characterization and recognition of different natural sub-classes of perfect graphs.   The recognition problem in the class of perfect graphs itself remained a fascinating open problem with a long history of results (survey in the classic book by \cite{grotschel2012}) till the Strong Perfect Graph Conjecture was proven by \cite{sgpt02}.  While many classes have efficient recognition algorithms, there are those for which the recognition problem is NP-complete.  Tolerance graphs are a sub-class of interval graphs, and the recognition problem for this class has been shown  to be NP-Complete by \cite{Mertzios2010}. The thinness of a graph, on the other hand, is a width parameter that generalizes certain properties of interval graphs. Interval graphs are exactly the graphs of thinness one. In their work, \cite{BONOMOBRABERMAN202353} have presented characterizations of 2-thin and proper 2-thin graphs as intersection graphs of rectangles in the plane, as vertex intersection graphs of paths on a grid, and by forbidden ordered patterns. Forbidden induced subgraph characterization for restricted cases of known graph classes are well-studied. For instance, even though a structural characterization by minimal forbidden induced subgraphs for the entire class of circle graphs is not known, \cite{BONOMOBRABERMAN202243} have given a characterization by minimal forbidden induced subgraphs of circle graphs, restricted to split graphs. Rectangle intersection graphs are the intersection graphs of axis-parallel rectangles in the plane. A graph is said to be a $k$-stabbable rectangle intersection graph ($k$-SRIG), if it has a rectangle intersection representation in which $k$ horizontal lines can be placed such that each rectangle intersects at least one of them. \cite{CHAKRABORTY2021354} have introduced some natural subclasses of 2-SRIG, and have shown that one of these subclasses can be recognized in linear-time if the input graphs are restricted to be triangle-free. Earlier, \cite{chakraborty2020stab} had developed a forbidden structure characterization for block graphs that are 2-ESRIG (in the case when each rectangle intersects exactly one of the k horizontal lines) and trees that are 3-ESRIG, which lead to polynomial-time recognition algorithms for these two classes of graphs. These forbidden structures are natural generalizations of asteroidal triples.

A result which has the flavour of Exact Hitting Set is in a recent paper by \cite{Bhyravarapu2021}.  They consider the problem  of coloring the vertex set of a graph with $k$ non-zero colors and one zero colour such that for each vertex $v$, there is a vertex $u$ in $N(v)$ which has a non-zero colour different from all the other vertices in $N(v)$.  This is called the $CFON^*$ colouring problem, and the goal is to find the minimum value of $k$ for which the graph has a $CFON^*$ colouring.  For $k=1$, this problem is the Exact Hitting Set problem of a set system in which the sets are the set of neighbours of each vertex.  For unit disk graphs, they show that testing if there is a $CFON^*$ coloring with one non-zero colour is NP-complete.  \\

%is on recognition of a sub-class of interval graphs, and 
%\textcolor{blue}{ Recognition of graph classes is also a well studied problem. Perfect graphs and most of its subclasses are found to be recognizable in polynomial time. Yet, there some results which proves the other way. Mertzios et.al.\cite{Mertzios2010} studies the recognition of tolerance graphs and bounded tolerance graphs and prove that both recognition problems are NP-complete, even in the case where the
%input graph is a trapezoid graph. This is surprising, given the fact that tolerance graphs are a relevant generalization of interval graphs and interval graphs can be recognized in linear time.  In a recent paper by Bhyravarapu et.al\cite{Bhyravarapu2021}, they show that given a unit disk graph, it can be recognised when its vertex open neighborhood can be conflict free colored with 1 color. This is in fact the exact hitting set problem of selecting a set of disks such that each disk is intersected by exactly one another disk only once. In this paper we study the recognition of the graph class $\ehig$  and show that it can be recognized in polynomial time.}\\

\noindent
\textbf{Forbidden Structure Characterizations: }While a graph $G$ may be identified as an intersection graph of a structured hypergraph, characterization of $G$ based on forbidden structures has also been equally well-studied. For instance, the class of chordal graphs are characterized by the absence of induced cycles of size 4 or more (\cite{Gol2004}). 
Similarly, by the celebrated theorem of Kuratowski (\cite{West}), the class of planar graphs must not have subgraphs that are subdivisions of $K_5$ and $K_{3,3}$. Interval graphs are known to be the class of chordal graphs without an asteroidal triple as induced subgraph \cite{lekkeikerker1962}. Recall that an asteroidal triple of a graph $G$ is a set of three independent vertices such that there is path between each pair of these vertices that does not contain any vertex of the neighborhood of the third. The class of proper interval graphs is a subclass of interval graphs that do not have a $K_{1,3}$ as an induced subgraph (\cite{roberts1978graph}). Refer to Table \ref{tab:table1} for a summary of these examples. Clearly, characterization of simple graphs based on their intersection models and forbidden structures are extremely well-studied notions in defining graph classes. 

\begin{table}[ht]
\footnotesize
  \centering   
  \begin{tabular}{|p{3cm}|p{4cm}|p{6.5cm}|}
  \hline
    \textit{Graph Class} & \textit{Intersection Model} & \textit{Forbidden Structures}\\
    \hline
    Simple & An exactly hittable hypergraph & NIL\\
    \hline
Planar & Segments on a plane & Subdivisions of $K_5$ and $K_{3,3}$ \cite{West}\\
    \hline    
    Chordal & Subtrees of a tree & $C_k$, for $k \geq 4$ \cite{Gol2004}\\
    \hline
    Path chordal & Paths on a tree & \textit{List given in \cite{leveque2009}}\\
    \hline
    Interval & Subpaths on a path & $C_k$, for $k \geq 4$ and  asteroidal triple \cite{lekkeikerker1962}\\
    \hline
        Proper interval & Sets of intervals not properly \allowbreak contained in each other & $C_k$, for $k \geq 4$, asteroidal triple and $K_{1,3}$ \cite{roberts1978graph}\\
    \hline
    \textbf{Exactly Hittable \allowbreak Interval Graphs} \allowbreak (New graph class) & Exactly hittable sets of \allowbreak intervals & $C_k$, for $k \geq 4$, asteroidal triple and induced path $P_k$ which has, in its open neighbourhood, an independent set of $k+3$ vertices\\
    \hline
     \end{tabular}
     \vspace{0.2cm}
      \caption{Intersection models and forbidden structures for well-known graph classes}
  \label{tab:table1}
\end{table}

\subsection*{Our results}
\noindent
\begin{enumerate}
\item We begin our set of results with a simple extension to a well-known theorem by \cite{Harary} that every graph $G$ is the intersection graph of some hypergraph $H$.
\begin{restatable} {observation}{ObsEH}\label{obs:UGEHH}
Every simple undirected graph is the intersection graph of an exactly hittable hypergraph.
Further, if $G$ is a  chordal graph, then it is the intersection graph of an exactly hittable set of subtrees of a tree.
\end{restatable}
% \begin{observation}\label{obs:UGEHH}
% Every simple undirected graph is the intersection graph of an exactly hittable hypergraph.
% Further, if $G$ is a  chordal graph, then it is the intersection graph of an exactly hittable set of subtrees of a tree.
% \end{observation}

We present proof of this observation in Section \ref{lem:necFbIntGraph}. Further to this observation, we look at a subclass of chordal graphs, namely interval graphs, which are intersection graphs of subpaths on a path. We ask if there is an exactly hittable intersection model for every interval graph, where the intersection model consists of subpaths on a path.  Interestingly, the answer is no.\\
%, as in the case of arbitrary graphs and arbitrary chordal graphs. 

\item We introduce the class of \textit{Exactly Hittable Interval Graphs} ($\ehig$), which is the set of interval graphs that have an exactly hittable interval representation. A given set of intervals defines a unique interval graph, but an interval graph can have many interval representations.  We say that an interval graph is an exactly hittable interval graph if and only if it has at least one exactly hittable interval representation.

\begin{definition}[Exactly Hittable Interval Graphs]
The class of exactly hittable interval graphs is the class of interval graphs which are intersection graphs of intervals that have exact hitting sets.
\end{definition}

We present a forbidden structure characterization for $\ehig$. First, we define a family $\mathcal{F}$ of simple graphs as follows:

\begin{definition} \label{def:forbidPattern}
For each $k \geq 1$, $\mathcal{F}_k$ denotes the set of connected interval graphs whose vertex set can be partitioned into an induced path $P$ consisting of $k$ vertices and the open neighbourhood of $P$ (consisting of only those vertices which are not in $P$) which is an independent set of size  $k+3$.  Further, $\mathcal{F}$ is defined to be $\displaystyle \bigcup_{k \geq 1}\mathcal{F}_k$.
%Let $\mathcal{F}$ be the set of an infinite family of interval graphs with the following structure: each graph has an induced path $P$ consisting of $k$ vertices, for some $k \geq 1$, and the open neighbourhood of $P$ contains an independent set of size at least $k+3$. %Hence, each graph in $\mathcal{F}$ is a \textbf{forbidden structure} for $\ehig$.
\end{definition}

Our main contribution in this paper is to prove that every graph in $\mathcal{F}$ is a forbidden structure for $\ehig$. See Fig.\ref{fig:fbIntGraphs} for examples of forbidden structures. In Fig.\ref{fig:fbIntGraphs}(i), $u$ is the induced path $P$ consisting of one vertex with an independent set of four vertices $a,b,c,d$ in its neighbourhood. Similarly, in Fig.\ref{fig:fbIntGraphs}(ii), $a$-$b$ is the induced path $P$ consisting of two vertices and $\{c,d,u,e,f\}$ is an independent set of five vertices in the neighbourhood of $P$.
 
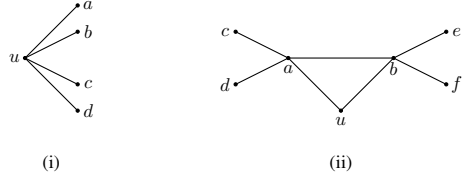
\begin{figure}[htb]
\centering
\begin{tikzpicture} [scale=0.7, every node/.style={scale=0.7}]

\draw(0,0)--(1,1) (-0.2,0) node{$u$} (1.2,1) node{$a$}; 
\draw(0,0)--(1,0.5) (1.2,0.5) node{$b$};
\draw(0,0)--(1,-0.5) (1.2,-0.5) node{$c$};
\draw(0,0)--(1,-1) (1.2,-1) node{$d$} (0.5,-2) node{(i)};

\draw[radius=0.3mm, color=black, fill=black](1,-1) circle; 
\draw[radius=0.3mm, color=black, fill=black](1,1) circle; 
\draw[radius=0.3mm, color=black, fill=black](0,0) circle; 
\draw[radius=0.3mm, color=black, fill=black](1,0.5) circle;
\draw[radius=0.3mm, color=black, fill=black](1,-0.5) circle; 

\draw(4,0.5)--(5,0); 
\draw(4,-0.5)--(5,0);
\draw(6,-1)--(5,0);
\draw(6,-1)--(7,0);
\draw(5,0)--(7,0);
\draw(8,0.5)--(7,0);
\draw(8,-0.5)--(7,0);
\draw(3.8,0.5) node{$c$};
\draw(3.8,-0.5) node{$d$};
\draw(5,-0.2) node{$a$};
\draw(7,-0.2) node{$b$};
\draw(8.2,0.5) node{$e$};
\draw(8.2,-0.5) node{$f$};
\draw(6,-1.2) node{$u$};
\draw(6,-2) node{(ii)};
\draw[radius=0.3mm, color=black, fill=black](4,0.5) circle; 
\draw[radius=0.3mm, color=black, fill=black](4,-0.5) circle; 
\draw[radius=0.3mm, color=black, fill=black](5,0) circle; 
\draw[radius=0.3mm, color=black, fill=black](6,-1) circle; 
\draw[radius=0.3mm, color=black, fill=black](7,0) circle; 
\draw[radius=0.3mm, color=black, fill=black](8,0.5) circle; 
\draw[radius=0.3mm, color=black, fill=black](8,-0.5) circle; 

\end{tikzpicture}
\caption{Two simple instances of forbidden structures}
\label{fig:fbIntGraphs}
\end{figure}
\noindent
By Definition \ref{def:forbidPattern}, 
%for each $k \geq 1$, $\mathcal{F}_k$ denotes the set of connected interval graphs whose vertex set can be partitioned into an induced path $P$ consisting of $k$ vertices and the open neighbourhood of $P$ (consisting of only those vertices which are not in $P$), which is an independent set of size  $k+3$. i.e, 
for any $k$, the set $\mathcal{F}_k$  may contain more than one graph which are forbidden structures.  For example, both the graphs in Fig. \ref{fig:fbIntGraphsexample} belong to $\mathcal{F}_2$.
 
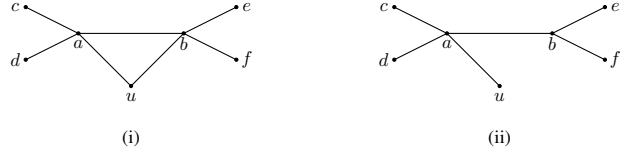
\begin{figure}[htb]
\centering
\begin{tikzpicture} [scale=0.7, every node/.style={scale=0.7}]
\draw(0,0.5)--(1,0); 
\draw(0,-0.5)--(1,0);
\draw(2,-1)--(1,0);
\draw(2,-1)--(3,0);
\draw(1,0)--(3,0);
\draw(4,0.5)--(3,0);
\draw(4,-0.5)--(3,0);
\draw(-0.2,0.5) node{$c$};
\draw(-0.2,-0.5) node{$d$};
\draw(1,-0.2) node{$a$};
\draw(3,-0.2) node{$b$};
\draw(4.2,0.5) node{$e$};
\draw(4.2,-0.5) node{$f$};
\draw(2,-1.2) node{$u$};
\draw(2,-2) node{(i)};
\draw[radius=0.3mm, color=black, fill=black](0,0.5) circle; 
\draw[radius=0.3mm, color=black, fill=black](0,-0.5) circle; 
\draw[radius=0.3mm, color=black, fill=black](1,0) circle; 
\draw[radius=0.3mm, color=black, fill=black](2,-1) circle; 
\draw[radius=0.3mm, color=black, fill=black](3,0) circle; 
\draw[radius=0.3mm, color=black, fill=black](4,0.5) circle; 
\draw[radius=0.3mm, color=black, fill=black](4,-0.5) circle; 

\draw(7,0.5)--(8,0); 
\draw(7,-0.5)--(8,0);
\draw(9,-1)--(8,0);
%\draw(9,-1)--(10,0);
\draw(8,0)--(10,0);
\draw(11,0.5)--(10,0);
\draw(11,-0.5)--(10,0);
\draw(6.8,0.5) node{$c$};
\draw(6.8,-0.5) node{$d$};
\draw(8,-0.2) node{$a$};
\draw(10,-0.2) node{$b$};
\draw(11.2,0.5) node{$e$};
\draw(11.2,-0.5) node{$f$};
\draw(9,-1.2) node{$u$};
\draw(9,-2) node{(ii)};
\draw[radius=0.3mm, color=black, fill=black](7,0.5) circle; 
\draw[radius=0.3mm, color=black, fill=black](7,-0.5) circle; 
\draw[radius=0.3mm, color=black, fill=black](8,0) circle; 
\draw[radius=0.3mm, color=black, fill=black](9,-1) circle; 
\draw[radius=0.3mm, color=black, fill=black](10,0) circle; 
\draw[radius=0.3mm, color=black, fill=black](11,0.5) circle; 
\draw[radius=0.3mm, color=black, fill=black](11,-0.5) circle; 
\end{tikzpicture}
\caption{Two instances of forbidden structures in $\mathcal{F}_2$}
\label{fig:fbIntGraphsexample}
\end{figure}
\noindent It may, however, be noted that,  Fig. \ref{fig:fbIntGraphsexample}(ii) contains $K_{1,4}$ as an induced subgraph, which itself is a forbidden structure in $\mathcal{F}_1$.

\begin{restatable}{theorem}{forbidden}\label{thm:EHInterGraphs}
An interval graph $G$ is exactly hittable if and only if it does not contain any graph from the set $\mathcal{F}$ as an induced subgraph.
\end{restatable}

This theorem is proved in Section \ref{sec:IntGraphEHR}. We believe that this result is an interesting addition to the existing graph characterizations, primarily because we could not find such an equivalence elsewhere in the literature, including graph classes repositories like graphclasses.org.  \\

\item In Section \ref{sec:canonRep}, we introduce, what we refer to as, a \textit{canonical interval representation} for an interval graph. Given an interval graph $G$,  a canonical interval representation $H_G$ is an \textbf{interval hypergraph} given by $H_G=([n],\mathcal{I})$, where $[n] = \{1, \ldots, n\}$  and  $\mathcal{I} \subseteq \{ \{i, i+1, \ldots, j\} \mid i \leq j, i,j \in [n] \}$, and all intervals have distinct left endpoints and distinct right endpoints. 
Further, for each $v \in G$,  $I_v \in \mathcal{I}$ denotes the corresponding interval.  For construction of  $H_G$, we start with the well known linear ordering of maximal cliques associated with an interval graph (\cite{GilHof2011,Gol2004}). An interval representation is constructed from the ordering such that the intersection graph of this representation is isomorphic to $G$. By construction, there exists exactly one canonical interval representation for every interval graph. While the canonical representation may be of independent interest, this representation is crucial in proving Theorem \ref{thm:EHInterGraphs} in this paper.
%\begin{itemize}
%\item An interval graph is exactly hittable if and only if its canonical interval representation has an exact hitting set.
%\item Except for intervals that belong to the same set of maximal cliques, no two intervals have same left endpoint or same right endpoint in the canonical representation.
%\item If an interval graph is exactly hittable, then an exact hitting set of the canonical interval representation can be found in polynomial time via the algorithm presented in Section \ref{sec:EHIHyp}.
%\item There exists exactly one canonical interval representation for every interval graph. 
%\end{itemize}

In Section \ref{sec:canonRep}, we prove the following theorem.
\begin{restatable}{theorem}{equivalence} \label{thm:EHIHeqEHIG}
Let $G$ be an interval graph. Let $H_G$ be its canonical interval representation constructed as described in Section \ref{sec:canonRep}. Then, $G$ is exactly hittable if and only if $H_G$ is exactly hittable.
\end{restatable}

\item Given an interval graph $G$ and its canonical interval representation $H_G$, we show that the algorithm by \cite{Dom2006} to solve the MMSC problem in interval hypergraphs can be used to recognize $\ehig$. We present the details in Section \ref{sec:algoEHIG}.\\

\item We show that the class $\ehig$ is positioned between the class of proper interval graphs and the class of interval graphs in the  containment hierarchy of graph classes.

\begin{restatable}{theorem}{hierarchy} \label{thm:Hierarchy}
Proper interval graphs $\subset$ \textsf{EHIG} $\subset$ Interval Graphs.
\end{restatable}

% \begin{theorem} \label{thm:Hierarchy}
% Proper interval graphs $\subset$ $\ehig$ $\subset$ Interval Graphs.
% \end{theorem}

\noindent
The proof of the second part of the above theorem follows from the definition of $\ehig$. We prove the first part of the containment relationship in Section \ref{subsec:PropSubSetEHIG}. Interestingly, the smallest forbidden structure of $\ehig$ is $K_{1,4}$ whereas that of the class of proper interval graphs is $K_{1,3}$.
\end{enumerate}
\subsection{Preliminaries}\label{sec:Prelims}

\begin{definition}[Intersection Graphs]
Given a set system $\mathcal{X}=(\mathcal{U},\mathcal{S})$, \textit{the intersection graph} $G(\mathcal{X})$ of sets in $\mathcal{X}$ is the simple graph obtained as follows. For every set $S \in \mathcal{S}$, there exists a vertex $v_S \in G$. An edge $(v_{S_i},v_{S_j})$ occurs in $G$ if and only if there exists two sets $S_i,S_j \in \mathcal{F}$ such that $S_i \cap S_j \neq \emptyset$. The family $\mathcal{S}$ is called a \textit{set representation} of the graph G. A set representation is also referred to as an \textit{intersection model} (\cite{Gol2004}, \cite{Harary}). 
\end{definition}

A hypergraph $H=(\mathcal{V},\mathcal{E})$ is a graph theoretic representation of a set system $\mathcal{X}=(\mathcal{U},\mathcal{S})$, where the set $\mathcal{V}$ corresponds to $\mathcal{U}$ and the set $\mathcal{E}$ corresponds to $\mathcal{S}$. The set $\mathcal{V}$ contains \textit{vertices} of hypergraph $H$ and the set $\mathcal{E}$ contains \textit{hyperedges}. In the intersection graph $G$, for every hyperedge $E \in \mathcal{E}$, there exists a vertex $v_E \in G$. An edge $(v_{E_i},v_{E_j})$ occurs in $G$ if and only if the hyperedges $E_i$ and $E_j$ have a non-empty intersection. 

\begin{definition}[Interval Graphs]
 A graph $G=(V,E)$ is an \textit{interval graph} if there exists an assignment of intervals on the real line to each vertex $v \in V(G)$ such that for each edge $(u,v)$ in $G$, the associated intervals $I(u)$ and $I(v)$ have a non-empty intersection. The set of intervals $\{I(v)\}_{v\in V(G)}$ is an interval representation or intersection model of $G$. 
%A graph $G$ is called an interval graph if each of its vertices can be associated with an interval on the real line in such a way that two vertices are adjacent if and only if the associated intervals have a non-empty intersection. These intervals are said to form an interval representation or intersection model of $G$.
%A graph $G=(V,E)$ is an \textit{interval graph} if an interval $I(v)$ can be associated to a vertex $v \in V(G)$ such that there exists an edge $(u,v)$ in $G$ and the associated intervals $I(u)$ and $I(v)$ have a non-empty intersection. The set of intervals $\{I(v)\}_{v\in V(G)}$ is an interval representation or intersection model of $G$. When $\mathcal{E}$ is a family of intervals on a line, then $G$ is an interval graph and $H$ is an \textit{interval hypergraph} \cite{Gol1985}.
\end{definition}

\noindent
\textbf{Open and Closed neighborhoods:} For a vertex $v$ in a graph $G = (V,E)$, the open neighborhood of $v$ in $G$, denoted by $N(v)$, is the set $\{u \in V \mid \{u,v\} \in E\}$ and the closed neighborhood of $v$ in $G$, denoted by $N[v]$, is the set  $N(v) \cup \{v\}$.  
%\begin{definition} \cite{CS2012} 
%The hypergraph $H=([n],\mathcal{I})$, where $[n] = \{1, \ldots, n\}$  and $\mathcal{I} \subseteq \{ \{i, i+1, \ldots, j\} \mid i \leq j, i,j \in [n] \}$ is known as an \textbf{interval hypergraph}. 
%\end{definition}
\begin{definition} \cite{CS2012} 
An \textbf{interval hypergraph} is any hypergraph $H=([n],\mathcal{I})$, where $[n] = \{1, \ldots, n\}$  and $\mathcal{I} \subseteq \{ \{i, i+1, \ldots, j\} \mid i \leq j, i,j \in [n] \}$. 
\end{definition}
Each hyperedge in $\mathcal{I}$ is a set of consecutive integers, which we call an \textit{interval}. In an interval $I = \{i,i+1,\ldots,j\}$, $i$ and $j$ are the \textit{left} and \textit{right endpoints} of $I$ respectively, which we denote by $l(I)$ and $r(I)$, respectively. We use $\mathcal{V}(H)$ (or simply $\mathcal{V}$) and $\mathcal{I}(H)$ (or simply $\mathcal{I}$) to denote the vertex set and the hyperedge set, respectively, of an interval hypergraph $H$. An interval hypergraph is said to be \textit{proper} if no interval is contained in another interval. If, for an interval graph $G$, there exists an interval representation in which no interval is properly contained inside another interval, then $G$ is a \textit{proper interval graph}. 

An interval graph is characterized by the existence of a linear ordering of its maximal cliques. In Section \ref{sec:IntGraphEHR}, we use the following characterization to obtain an exactly hittable interval representation for an interval graph, if such a representation exists. 

\begin{theorem}[\cite{GilHof2011}] 
\label{thm:linord}
The maximal cliques of an interval graph $G$ can be linearly ordered such that, for every vertex $x$ of $G$, the maximal cliques containing $x$ occur consecutively.
\end{theorem}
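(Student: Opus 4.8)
The plan is to exploit the interval representation guaranteed by the hypothesis together with the Helly property of intervals on the real line. Since $G$ is an interval graph, fix an interval representation $\{I(v)\}_{v \in V(G)}$ and, perturbing endpoints if necessary, assume all endpoints are distinct. The first step is to record the dictionary between the graph and its representation: two vertices are adjacent exactly when their intervals meet, so a clique of $G$ is precisely a set of pairwise intersecting intervals, and a maximal clique is a maximal such set. I would then invoke the Helly property for intervals on a line---if a family of intervals intersects pairwise, then $\max_v l(I(v)) \le \min_v r(I(v))$, so the whole family has a common point---to attach to each maximal clique $C$ a common point $p_C \in \bigcap_{v \in C} I(v)$.

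The key step is to show that this point faithfully encodes the clique. For any point $p$ on the line, let $S_p$ denote the set of vertices whose interval contains $p$; since these intervals all share $p$, they pairwise intersect, so $S_p$ is a clique. Taking $p = p_C$ gives $S_{p_C} \supseteq C$, and because $C$ is a maximal clique while $S_{p_C}$ is a clique containing it, we must have $S_{p_C} = C$. In particular, distinct maximal cliques receive distinct points, for if $p_{C} = p_{C'}$ then $C = S_{p_C} = S_{p_{C'}} = C'$. I would then order the maximal cliques $C_1, \ldots, C_m$ by increasing value of $p_{C_i}$.

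It remains to verify consecutiveness. For a fixed vertex $x$ with $I(x) = [l(x), r(x)]$, the equivalences $x \in C_i \iff x \in S_{p_{C_i}} \iff p_{C_i} \in I(x)$ show that the maximal cliques containing $x$ are exactly those whose representative point lies in $I(x)$. Because the points $p_{C_1} < \cdots < p_{C_m}$ are sorted and $I(x)$ is an interval, the set of indices $i$ with $p_{C_i} \in [l(x), r(x)]$ forms a contiguous block, which is precisely the claim.

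The main obstacle is the middle step---establishing that each maximal clique has a representative point $p_C$ with $S_{p_C} = C$ \emph{exactly}, rather than merely a common point. This is where the Helly property and the maximality of $C$ combine, and it is exactly what turns ``$I(x)$ is an interval'' into ``the maximal cliques containing $x$ are consecutive.'' Everything else is bookkeeping.
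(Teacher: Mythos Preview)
Your argument is correct and is in fact the standard proof of the Gilmore--Hoffman characterization via Helly's property for intervals. Note, however, that the paper does not supply its own proof of this statement: Theorem~\ref{thm:linord} is quoted as a classical background result with a citation to \cite{GilHof2011}, and is used as a black box to build the canonical interval representation in Section~\ref{sec:canonRep}. So there is no ``paper's proof'' to compare against; your write-up simply fills in a proof the paper chose to omit.
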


\noindent The class of interval graphs is a subfamily of the class of \textit{chordal graphs}, which, in turn, is a subfamily of the class of \textit{perfect graphs}. A chordal graph is a simple graph that does not contain any induced cycle of size $\geq 4$ (\cite{Gol2004}). Chordal graphs are known to be intersection graphs of subtrees of a tree (\cite{gavril1974intersection}).
%\begin{definition}[Chordal graphs and clique trees]
%In a cycle $C_k$, where $k \geq 4$, a chord is an edge joining two non-consecutive vertices of $C_k$. A graph $G$ is called chordal, if any cycle of length at least 4 in $G$, has a chord.
A clique tree $T$ of a graph $G$ is a tree with the maximal cliques of $G$ as nodes, such that for every vertex $v$ of $G$, the maximal cliques containing 
$v$ induce a subtree $T(v)$ in $T$. In fact, chordal graphs are exactly the graphs that admit a clique tree (\cite{mckee1999}). A clique tree is also known as a \textit{tree decomposition} of a graph. \\

% For an undirected graph $G = (V,E)$, a tree decomposition of $G$ is a pair $(T, X)$, where $T$ is a tree and $X = \{X_i \subseteq V \mid i \in V(T)\}$ such that 
% \begin{itemize}
% \item $\bigcup_{i \in V(T)} X_i = V$,
% \item for each edge $(u,v) \in E$, there exists a node $i \in V(T)$ such that $u,v \in X_i$, and
% \item for each $u \in V$, the set of nodes $\{i \in V(T)\mid u \in X_i\}$ induces a connected subtree in $T$. 
% \end{itemize} 
% %\end{definition}

\noindent
\textbf{Note:} We draw the reader's attention to the distinction between \textit{interval hypergraphs} and \textit{interval graphs}, and {\em proper interval hypergraphs} and {\em proper interval graphs}, as these are used extensively throughout the paper. Furthermore, recall that an interval graph is an Exactly Hittable Interval Graph if it has an intersection model, made of intervals, that has an exact hitting set.  On the other hand, an Exactly Hittable Interval Hypergraph is one that has an exact hitting set.  

\begin{observation} \label{obs:imp1}
Since our goal is to characterize interval graphs that have an exactly hittable interval representation, we assume without loss of generality that, in the graph $G$, for every sequence of consecutive maximal cliques in a linear ordering, there is at most one vertex which starts and ends in this sequence.
\end{observation}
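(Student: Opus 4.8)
The plan is to show that any two vertices whose range of maximal cliques coincides are redundant for the exact-hittability question, so that deleting all but one of them changes neither the answer nor the clique structure we care about. First I make the phrasing precise. Fix a linear ordering $C_1,\dots,C_m$ of the maximal cliques guaranteed by Theorem~\ref{thm:linord}, and for each vertex $x$ let $\mathrm{start}(x)$ and $\mathrm{end}(x)$ be the smallest and largest indices $i$ with $x\in C_i$. A vertex \emph{starts and ends} in the sequence $C_i,\dots,C_j$ precisely when $\mathrm{start}(x)=i$ and $\mathrm{end}(x)=j$, so the observation asserts that we may assume no two vertices share the same pair $(\mathrm{start}(x),\mathrm{end}(x))$. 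Suppose $u\neq v$ have $(\mathrm{start}(u),\mathrm{end}(u))=(\mathrm{start}(v),\mathrm{end}(v))=(i,j)$. By the consecutiveness in Theorem~\ref{thm:linord}, each of $u,v$ lies in $C_k$ for every $i\le k\le j$ and in no other maximal clique, so $u$ and $v$ belong to exactly the same maximal cliques. Hence $N[u]=N[v]$: a vertex $w$ is adjacent to $u$ iff it shares some $C_k$ with $u$, iff it shares that same $C_k$ with $v$, iff it is adjacent to $v$; and $u,v$ are themselves adjacent since both lie in $C_i$. Thus $u$ and $v$ are true twins.

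Next I argue that $G$ is exactly hittable if and only if $G-v$ is. For the forward direction, take an exactly hittable interval representation of $G$ with exact hitting set $P$ and delete the interval of $v$; every surviving hyperedge still contains exactly one point of $P$, so the restriction is an exactly hittable representation of $G-v$. For the converse, start from an exactly hittable representation $H'$ of $G-v$ with exact hitting set $P'$ and add a new interval for $v$ equal to the interval $I(u)$. Because $I(v)=I(u)$, the vertex $v$ becomes adjacent in the resulting intersection graph exactly to $u$ and to the neighbours of $u$, which by $N[u]=N[v]$ is precisely $N_G(v)$; hence the intersection graph is $G$. Moreover $I(v)$ contains the same single point of $P'$ as $I(u)$, so $P'$ remains an exact hitting set. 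Exact hittability is therefore preserved in both directions.

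Finally, starting from $G$ and iterating the above deletion, I remove one vertex from each group of vertices sharing a common $(\mathrm{start},\mathrm{end})$ pair until at most one vertex survives per group. The clique ordering of the reduced graph is the original ordering with the deleted vertices struck from the cliques they occupied, so the notion of a \emph{sequence of consecutive cliques} is inherited, and the reduced graph satisfies the stated property and is exactly hittable iff $G$ is; this justifies the assumption without loss of generality. The only step needing care --- the main obstacle --- is the converse direction above: one must check that re-inserting a duplicate interval rebuilds $G$ rather than some other graph, which is exactly where the twin identity $N[u]=N[v]$ is used, together with the fact that an exact hitting set constrains only the hyperedges, so that duplicating a hyperedge (or leaving a point of $P'$ unused after a deletion) causes no violation.
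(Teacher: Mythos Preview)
Your argument is correct and follows the same idea as the paper's own justification: delete all but one vertex from each group sharing the same $(\mathrm{start},\mathrm{end})$ pair (i.e., true twins) and observe that exact hittability is preserved. The paper's justification is terser, explicitly stating only the direction ``if the reduced graph has an exactly hittable representation, so does the original graph,'' whereas you supply both directions with explicit constructions; your proof is thus a more complete version of the same approach.
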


\noindent
Indeed, if a given graph violates this property and there are two or more vertices that start at the same clique and end at the same clique in a sequence, then we retain only one of those vertices. The justification for this assertion is that if the resulting graph has an exactly hittable interval representation, so does the original graph. \\

\noindent
\textbf{Notations}: All other definitions and notations on simple graphs, used throughout this paper, have been taken from \cite{West}.

%\textcolor{red}{Write exactly what we mean by adding and deleting points from a line}
%\input{EHInterHyper.tex}
\section{A Canonical Interval Representation} \label{sec:canonRep}
In this section, we obtain \textit{a canonical interval representation} $H_G$ of a given interval graph $G$. The canonical interval representation is nothing but a special intersection model of $G$. Consequently, the intersection graph of intervals in $H_G$ is isomorphic to $G$. The construction follows a well-defined set of steps with the result that every interval graph has a unique canonical interval representation. The canonical representation $H_G$ is obtained by \textit{stretching} intervals so that all intervals have distinct left endpoints and distinct right endpoints. In other words, no pair of intervals start at the same point or end at the same point. 
%We show in Theorem \ref{thm:EHIHeqEHIG} that if $G$ has an exactly hittable interval representation, then $H_G$ is exactly hittable. 
The canonical interval representation
%, especially the intermediate point between the gadgets in it, 
is crucial to the proof of our main result in Section \ref{sec:IntGraphEHR}.\\

\noindent
\textbf{Outline: }The starting point of this construction is to use the well known linear ordering of maximal cliques associated with an interval graph (\cite{Gol2004}) (refer Theorem \ref{thm:linord}). 
\noindent
Fig. \ref{fig:IntGraphMinRep} gives an illustration of how to obtain the canonical interval representation of an interval graph. 
Let $G=(V,E)$ be the given interval graph. Let $\mathcal{O}=\{Q_1, Q_2 \ldots Q_t\}$ be a linear ordering of maximal cliques in $G$. For each $v \in V(G)$, let the interval representation of $G$ obtained from ${\cal O}$ be $I(v) = [l(v), r(v)]$, where $l(v)$ is the index of the leftmost clique in ${\cal O}$ that contains $v$, and $r(v)$ is the index of the rightmost clique containing $v$. Let $\mathcal{I}' = \{I(v) \mid v \in V(G)\}$. To construct the canonical interval representation, we associate a gadget $D_i$ with maximal clique $Q_i$, for $1 \leq i \leq t$. For every maximal clique $Q_i$, we look at $D_i$ and stretch those intervals in $\mathcal{I}'$ that either start at $i$ or end at $i$. Intuitively, we can think of $I(v)$ as being {\em stretched  to the left} if $l(v)=i$ and as being {\em stretched to the right} if $r(v)=i$. Inside gadget $D_i$, there is a point, which we denote by $z_i$, with the following property: any interval for which $l(v)=i$, starts at $z_i$ or to the left of $z_i$ and any interval for which $r(v)=i$, ends at $z_i$ or to the right of $z_i$. We refer to $z_i$ as the \textit{zero-point} of gadget $D_i$. The exact construction of stretched intervals is detailed in the subsequent paragraphs.

The gadgets $D_1,D_2 \ldots, D_t$ are arranged in the same order as that of the maximal cliques in ${\cal O}$. Further,  for each $v \in V(G)$, the stretched interval associated with $I(v)$ has $D_{l(v)}$ as its left-most gadget and $D_{r(v)}$ as its rightmost gadget.  To complete the construction, between each pair of consecutive gadgets, we add an additional point, and we refer to these points as intermediate points.  These points play a crucial role in our characterization of $\ehig$s in Section \ref{subsec:computeExactHittingSet}. 
%We denote the additional point between gadgets $D_i$ and $D_{i+1}$ by $e_{ii+1}$. 
The stretched interval of $I(v)$ contains all these additional points between consecutive gadgets in the ordered set $\{D_{l(v)}, D_{l(v)+1}, \ldots, D_{r(v)}\}$.  Let $H_G = (\mathcal{V},\mathcal{I})$ denote the canonical interval hypergraph thus obtained.  $\mathcal{V}$ is the set of all points internal to the gadgets (defined below) and the $t-1$ additional points between consecutive gadgets (as described above).  The intervals in ${\cal I}$ are the stretched intervals corresponding to each interval in ${\cal I'}$.  We now describe the gadget $D_i$ associated with maximal clique $Q_i$, $1 \leq i \leq t$.\\

%The big figure begins
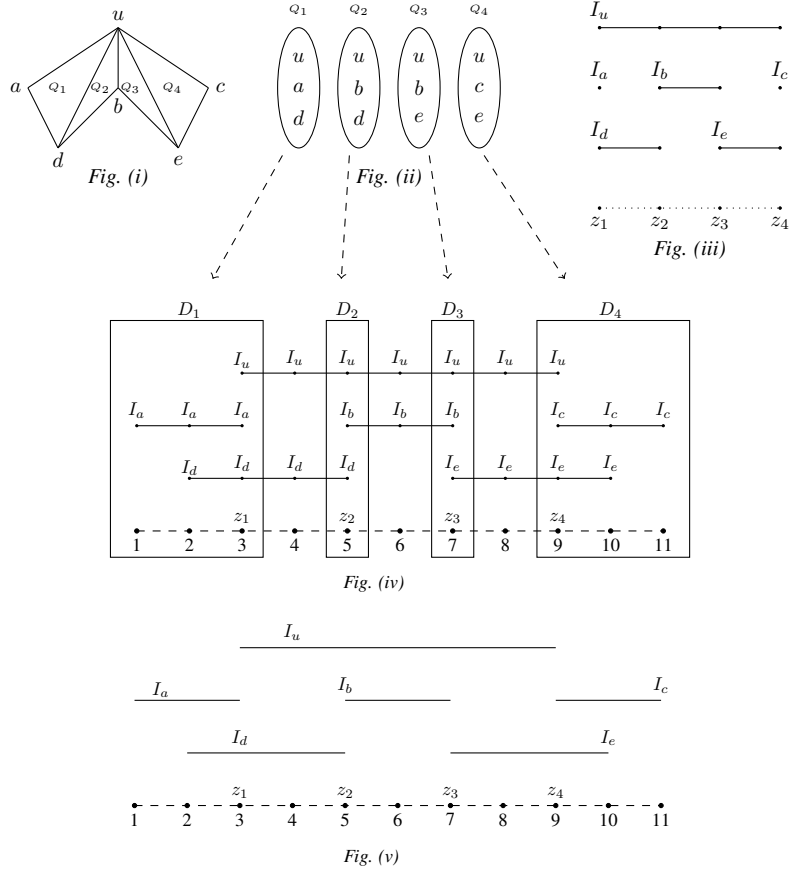
\begin{figure}[ht]
\centering
\begin{tikzpicture}[scale=0.8, every node/.style={scale=0.8}]
%Fig (i)
\draw(4.5,0)--(6,1); 
\draw(4.5,0)--(5,-1);
\draw(6,1)--(6,0);
\draw(6,1)--(7.5,0);
\draw(6,1)--(7,-1);
\draw(6,0)--(5,-1);
\draw(6,0)--(7,-1);
\draw(7,-1)--(7.5,0);
\draw(6,1)--(5,-1);
\draw(6,1.2) node{$u$};
\draw(4.3,0) node{$a$};
\draw(5,-1.2) node{$d$};
\draw(6,-0.3) node{$b$};
\draw(7.7,0) node{$c$};
\draw(7,-1.2) node{$e$} (6,-1.5) node{\textit{Fig. (i)}};

%Fig (ii)
\draw(5,0) node{\tiny{$Q_1$}};
\draw(5.7,0) node{\tiny{$Q_2$}};
\draw(6.2,0) node{\tiny{$Q_3$}};
\draw(6.9,0) node{\tiny{$Q_4$}};

\draw(9,0.5) node{$u$};
\draw(10,0.5) node{$u$};
\draw(11,0.5) node{$u$};
\draw(12,0.5) node{$u$};
\draw(9,0) node{$a$};
\draw(10,0) node{$b$};
\draw(11,0) node{$b$};
\draw(12,0) node{$c$};
\draw(9,-0.5) node{$d$};
\draw(10,-0.5) node{$d$};
\draw(11,-0.5) node{$e$};
\draw(12,-0.5) node{$e$};

\draw(10.5,-1.5) node{\textit{Fig. (ii)}};

\draw(9,1.3) node{\tiny{$Q_1$}};
\draw(10,1.3) node{\tiny{$Q_2$}};
\draw(11,1.3) node{\tiny{$Q_3$}};
\draw(12,1.3) node{\tiny{$Q_4$}};

\draw(10.5,-2.5) node{}; %Empty Node

\draw (9,0) ellipse (0.35cm and 1cm);
\draw (10,0) ellipse (0.35cm and 1cm);
\draw (11,0) ellipse (0.35cm and 1cm);
\draw (12,0) ellipse (0.35cm and 1cm);

\draw[radius=0.2mm, fill=black](14,1) circle node[above=0.02cm] {$I_u$};
\draw[radius=0.2mm, fill=black](15,1) circle;
\draw[radius=0.2mm, fill=black](16,1) circle;
\draw[radius=0.2mm, fill=black](17,1) circle;
\draw(14,1)--(17,1);

\draw[radius=0.2mm, fill=black](14,0) circle node[above=0.02cm] {$I_a$};
\draw[radius=0.2mm, fill=black](15,0) circle node[above=0.02cm] {$I_b$};
\draw[radius=0.2mm, fill=black](16,0) circle;
\draw[radius=0.2mm, fill=black](17,0) circle node[above=0.02cm] {$I_c$};
\draw[radius=0.2mm, fill=black](14,-1) circle node[above=0.02cm] {$I_d$};
\draw[radius=0.2mm, fill=black](15,-1) circle;
\draw[radius=0.2mm, fill=black](16,-1) circle node[above=0.02cm] {$I_e$};
\draw[radius=0.2mm, fill=black](17,-1) circle;
\draw(15,0)--(16,0); 
\draw(14,-1)--(15,-1); 
\draw(16,-1)--(17,-1); 

\draw[dotted](14,-2)--(17,-2);
\draw[radius=0.2mm, fill=black](14,-2) circle node[below=0.02cm] {$z_1$};
\draw[radius=0.2mm, fill=black](15,-2) circle node[below=0.02cm] {$z_2$};
\draw[radius=0.2mm, fill=black](16,-2) circle node[below=0.02cm] {$z_3$};
\draw[radius=0.2mm, fill=black](17,-2) circle node[below=0.02cm] {$z_4$};

\draw(15.5,-2.7) node{\textit{Fig. (iii)}};

\node (Q1) at (8.8,-1){};
\node (D1) at (7.5,-3.3){};
\draw[dashed,->]  (Q1) to (D1);

\node (Q2) at (9.85,-1){}; %Empty Node
\node (D2) at (9.7,-3.3){}; %Empty Node
\draw[dashed,->]  (Q2) to (D2);

\node (Q3) at (11.15,-1){}; %Empty Node
\node (D3) at (11.5,-3.3){}; %Empty Node
\draw[dashed,->]  (Q3) to (D3);

\node (Q4) at (12,-1){}; %Empty Node
\node (D4) at (13.5,-3.3){}; %Empty Node
\draw[dashed,->]  (Q4) to (D4);

\end{tikzpicture}

\begin{tikzpicture}[scale=0.7, every node/.style={scale=0.7}]
%Fig (iv)

\draw[radius=0.2mm, fill=black](2,-1) circle;
\draw[radius=0.2mm, fill=black](3,1) circle;

\draw(1.5,0.2);
\draw(3,1.25) node{$I_u$};
\draw(2,-0.8) node{$I_d$};
\draw(2,2.2) node{$D_1$};
\draw(5,2.2) node{$D_2$};
\draw(7,2.2) node{$D_3$};
\draw(10,2.2) node{$D_4$};

\draw[radius=0.2mm, fill=black](4,1) circle node[above=0.02cm] {$I_u$};
\draw[radius=0.2mm, fill=black](5,1) circle node[above=0.02cm] {$I_u$};
\draw[radius=0.2mm, fill=black](6,1) circle node[above=0.02cm] {$I_u$};
\draw[radius=0.2mm, fill=black](7,1) circle node[above=0.02cm] {$I_u$};
\draw[radius=0.2mm, fill=black](8,1) circle node[above=0.02cm] {$I_u$};
\draw[radius=0.2mm, fill=black](9,1) circle node[above=0.02cm] {$I_u$};

\draw[radius=0.2mm, fill=black](1,0) circle node[above=0.02cm] {$I_a$};
\draw[radius=0.2mm, fill=black](2,0) circle node[above=0.02cm] {$I_a$};
\draw[radius=0.2mm, fill=black](3,0) circle node[above=0.02cm] {$I_a$};

\draw[radius=0.2mm, fill=black](4,-1) circle node[above=0.02cm] {$I_d$};
\draw[radius=0.2mm, fill=black](3,-1) circle node[above=0.02cm] {$I_d$};
\draw[radius=0.2mm, fill=black](5,-1) circle node[above=0.02cm] {$I_d$};

\draw[radius=0.2mm, fill=black](5,0) circle node[above=0.02cm] {$I_b$};
\draw[radius=0.2mm, fill=black](6,0) circle node[above=0.02cm] {$I_b$};
\draw[radius=0.2mm, fill=black](7,0) circle node[above=0.02cm] {$I_b$};

\draw[radius=0.2mm, fill=black](7,-1) circle node[above=0.02cm] {$I_e$};
\draw[radius=0.2mm, fill=black](8,-1) circle node[above=0.02cm] {$I_e$};
\draw[radius=0.2mm, fill=black](9,-1) circle node[above=0.02cm] {$I_e$};
\draw[radius=0.2mm, fill=black](10,-1) circle node[above=0.02cm] {$I_e$};

\draw[radius=0.2mm, fill=black](11,0) circle node[above=0.02cm] {$I_c$};
\draw[radius=0.2mm, fill=black](9,0) circle node[above=0.02cm] {$I_c$};
\draw[radius=0.2mm, fill=black](10,0) circle node[above=0.02cm] {$I_c$};

\draw (0.5,2) rectangle (3.4,-2.5);
\draw (4.6,2) rectangle (5.4,-2.5);
\draw (6.6,2) rectangle (7.4,-2.5);
\draw (8.6,2) rectangle (11.5,-2.5);

\draw[radius=0.4mm, color=black, fill=black](1,-2) circle node[below=0.02cm]{1};
\draw[radius=0.4mm, color=black, fill=black](2,-2) circle node[below=0.02cm]{2};
\draw[radius=0.4mm, color=black, fill=black](3,-2) circle node[below=0.02cm]{3} node[above=0.02cm] {$z_1$};
\draw[radius=0.4mm, color=black, fill=black](4,-2) circle node[below=0.02cm]{4};
\draw[radius=0.4mm, color=black, fill=black](5,-2) circle node[below=0.02cm] {5} node[above=0.02cm] {$z_2$};
\draw[radius=0.4mm, color=black, fill=black](6,-2) circle node[below=0.02cm]{6}; 
\draw[radius=0.4mm, color=black, fill=black](7,-2) circle node[below=0.02cm]{7} node[above=0.02cm] {$z_3$};
\draw[radius=0.4mm, color=black, fill=black](8,-2) circle node[below=0.02cm]{8};
\draw[radius=0.4mm, color=black, fill=black](9,-2) circle node[below=0.02cm]{9} node[above=0.02cm] {$z_4$};;
\draw[radius=0.4mm, color=black, fill=black](10,-2) circle node[below=0.02cm]{10};
\draw[radius=0.4mm, color=black, fill=black](11,-2) circle node[below=0.02cm]{11};
\draw[dashed](1,-2)--(11,-2); 

\draw(3,1)--(9,1); %I_u
\draw(1,0)--(3,0); %I_a
\draw(5,0)--(7,0); %I_b
\draw(2,-1)--(5,-1); %I_d
\draw(7,-1)--(10,-1); %I_e 
\draw(9,0)--(11,0); %I_c
\draw(5.5,-3) node{\textit{Fig. (iv)}};
\draw(4.5,-3.5) node{}; %Empty Node
\end{tikzpicture}

\begin{tikzpicture}[scale=0.7, every node/.style={scale=0.7}]
%Fig (v)
\draw(3,1)--(9,1); %I_u
\draw(1,0)--(3,0); %I_a
\draw(5,0)--(7,0); %I_b
\draw(2,-1)--(5,-1); %I_d
\draw(7,-1)--(10,-1); %I_e 
\draw(9,0)--(11,0); %I_c

\draw(1.5,0.2) node{$I_a$};
\draw(3,-1) node[above=0.02cm] {$I_d$};
\draw(4,1) node[above=0.02cm] {$I_u$};
\draw(5,0) node[above=0.02cm] {$I_b$};
\draw(10,-1) node[above=0.02cm] {$I_e$};
\draw(11,0) node[above=0.02cm] {$I_c$};

\draw[radius=0.4mm, color=black, fill=black](1,-2) circle node[below=0.02cm]{1};
\draw[radius=0.4mm, color=black, fill=black](2,-2) circle node[below=0.02cm]{2};
\draw[radius=0.4mm, color=black, fill=black](3,-2) circle node[below=0.02cm]{3} node[above=0.02cm] {$z_1$};
\draw[radius=0.4mm, color=black, fill=black](4,-2) circle node[below=0.02cm]{4};
\draw[radius=0.4mm, color=black, fill=black](5,-2) circle node[below=0.02cm] {5} node[above=0.02cm] {$z_2$};
\draw[radius=0.4mm, color=black, fill=black](6,-2) circle node[below=0.02cm]{6}; 
\draw[radius=0.4mm, color=black, fill=black](7,-2) circle node[below=0.02cm]{7} node[above=0.02cm] {$z_3$};
\draw[radius=0.4mm, color=black, fill=black](8,-2) circle node[below=0.02cm]{8};
\draw[radius=0.4mm, color=black, fill=black](9,-2) circle node[below=0.02cm]{9} node[above=0.02cm] {$z_4$};;
\draw[radius=0.4mm, color=black, fill=black](10,-2) circle node[below=0.02cm]{10};
\draw[radius=0.4mm, color=black, fill=black](11,-2) circle node[below=0.02cm]{11};
\draw[dashed](1,-2)--(11,-2); 
\draw(5.5,-3) node{\textit{Fig. (v)}};
\end{tikzpicture} 

\caption{\scriptsize{Construction of Canonical Interval Representation \textit{(i)} Interval Graph $G$ with its maximal cliques $Q_1,Q_2,Q_3,Q_4$ \textit{(ii)} Linear ordering of maximal cliques $\mathcal{O} = \{Q_1,Q_2,Q_3,Q_4\}$  \textit{(iii)} Interval representation of $G$ obtained from $\mathcal{O}$ \textit{(iv)} Gadgets $D_1$ to $D_4$ \textit{(v)} Canonical interval representation for $G$ }}
\label{fig:IntGraphMinRep}
\end{figure}%

\noindent
\textbf{Construction of the gadget $D_i$ for maximal clique $Q_i$:} 
%\label{subsec:constructGadget}
Let $\{I(v_1),I(v_2), $ $\ldots, I(v_m)\}$ be the ordered set of intervals such that for each $1 \leq k \leq m$, $l(v_k) = i$ and $r(v_k) > r(v_j)$ whenever $1 \leq k < j \leq m$.   In other words, the ordered set  considers the intervals whose left endpoint is $i$ in descending order of their right endpoints. Then, for each $1 \leq k \leq m$,  the left endpoint of the interval of $I(v_k)$ is stretched $k-1$ points to the left. By this, $l(v_1)$ is kept at $z_i$ itself, as no stretching is done on it %: this can be viewed as stretching $I(v_k)$ to the left.
(see Fig. \ref{fig:gadgetImage}).

%The big figure begins
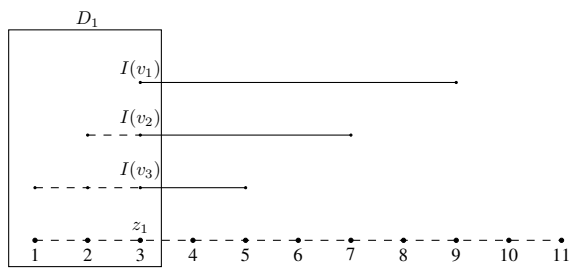
\begin{figure}[ht]
\centering
\begin{tikzpicture}[scale=0.7, every node/.style={scale=0.7}]
%Fig (iv)

\draw[radius=0.2mm, fill=black](2,-1) circle;
\draw[radius=0.2mm, fill=black](3,1) circle;

\draw(1.5,0.2);
\draw(2,2.2) node{$D_1$};
\draw (0.5,2) rectangle (3.4,-2.5);

%I_1
\draw(3,1.25) node{$I(v_1)$};
\draw[radius=0.2mm, fill=black](9,1) circle node[above=0.02cm] {};

%I_2
\draw[radius=0.2mm, fill=black](7,0) circle node[above=0.02cm] {};
\draw[radius=0.2mm, fill=black](3,0) circle node[above=0.02cm] {$I(v_2)$};
\draw[radius=0.2mm, fill=black](2,0) circle node[above=0.02cm] {};

%I_3
\draw[radius=0.2mm, fill=black](3,-1) circle node[above=0.02cm] {$I(v_3)$};
\draw[radius=0.2mm, fill=black](5,-1) circle node[above=0.02cm] {};
\draw[radius=0.2mm, fill=black](1,-1) circle node[above=0.02cm] {};

\draw[radius=0.4mm, color=black, fill=black](1,-2) circle node[below=0.02cm]{1};
\draw[radius=0.4mm, color=black, fill=black](2,-2) circle node[below=0.02cm]{2};
\draw[radius=0.4mm, color=black, fill=black](3,-2) circle node[below=0.02cm]{3} node[above=0.02cm] {$z_1$};
\draw[radius=0.4mm, color=black, fill=black](4,-2) circle node[below=0.02cm]{4};
\draw[radius=0.4mm, color=black, fill=black](5,-2) circle node[below=0.02cm] {5};
\draw[radius=0.4mm, color=black, fill=black](6,-2) circle node[below=0.02cm]{6}; 
\draw[radius=0.4mm, color=black, fill=black](7,-2) circle node[below=0.02cm]{7};
\draw[radius=0.4mm, color=black, fill=black](8,-2) circle node[below=0.02cm]{8};
\draw[radius=0.4mm, color=black, fill=black](9,-2) circle node[below=0.02cm]{9};
\draw[radius=0.4mm, color=black, fill=black](10,-2) circle node[below=0.02cm]{10};
\draw[radius=0.4mm, color=black, fill=black](11,-2) circle node[below=0.02cm]{11};
\draw[dashed](1,-2)--(11,-2); 

\draw(3,1)--(9,1); %I_1
\draw(3,0)--(7,0); %I_2
\draw[dashed](2,0)--(3,0); %I_2
\draw(3,-1)--(5,-1); %I_3
\draw[dashed](1,-1)--(3,-1); %I_3
%\draw(9,0)--(11,0); %I_c
%\draw(7,-1)--(10,-1); %I_e 
%\draw(5,0)--(7,0); %I_b
\end{tikzpicture}
\caption{\scriptsize{\textit{Stretching} intervals to the left}}
\label{fig:gadgetImage}
\end{figure}%

On the integer line, the left end point of $I(v_k) \in D_1$, which is the most left stretched interval in $D_1$, is taken as point 1.  %the left endpoint of the stretched interval of $I(v_k)$ is $z_i-k+1$. 
We next consider those intervals $I(v)$ such that $r(v)=i$.  Let $\{I(v_1),I(v_2), \ldots, I(v_m)\}$ be the ordered set of intervals such that for each $1 \leq k \leq t$, $r(v_k) = i$ and $l(v_k) < l(v_j)$ whenever $1 \leq k < j \leq m$.   In other words, the ordered set  considers the intervals whose right endpoint is $i$ in ascending order of their left endpoints. Then, for each $1 \leq k \leq m$,  the right endpoint of the interval of $I(v_k)$ is stretched $k-1$ points to the right. %: this can be viewed as stretching $I(v_k)$ to the right.
On the integer line, the right endpoint of the stretched interval of $I(v_k)$ would be $z_i+k-1$. This completes the description of the gadget $D_i$. Note that for $I(v)$ in ${\cal I}$, the stretched interval is stretched to the left only in the leftmost gadget in which it is present, and it is stretched to the right in the rightmost gadget in which it is present. By construction, no two intervals share the same left endpoint and the same right endpoint.

\begin{lemma} \label{lem:canonEqIntGraph} 
Let $H_G$ be the canonical interval representation of graph $G$ as constructed using the above procedure. Then, $G$ is isomorphic to the intersection graph of intervals in $H_G$.
\end{lemma}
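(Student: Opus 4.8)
The plan is to exhibit the natural vertex correspondence $\phi\colon V(G)\to\mathcal{I}$ that sends each vertex $v$ to the stretched interval of $I(v)$, and to prove that $\phi$ is a graph isomorphism between $G$ and the intersection graph of $H_G$. Since the construction assigns to each vertex exactly one hyperedge and guarantees that no two hyperedges share both a left and a right endpoint, $\phi$ is a bijection between $V(G)$ and the hyperedge set of $H_G$; hence it suffices to show that $\phi$ preserves adjacency in both directions, i.e.\ that for distinct $u,v\in V(G)$ the stretched intervals of $I(u)$ and $I(v)$ intersect if and only if $u$ and $v$ are adjacent in $G$. The first ingredient I would record is the classical consequence of Theorem \ref{thm:linord}: since every maximal clique of $G$ is some $Q_j$ and the cliques containing a vertex $x$ occur consecutively, we have $x\in Q_j$ exactly when $l(x)\le j\le r(x)$. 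Consequently $u$ and $v$ are adjacent in $G$ if and only if they lie together in some maximal clique $Q_j$, which happens if and only if $[l(u),r(u)]\cap[l(v),r(v)]\neq\emptyset$.

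For the forward direction, suppose $u$ and $v$ are adjacent and fix a common maximal clique $Q_j$, so $l(u)\le j\le r(u)$ and $l(v)\le j\le r(v)$. I would invoke the defining property of the zero point $z_j$: any interval whose index range contains $j$ starts at or to the left of $z_j$ and ends at or to the right of $z_j$, so its stretched interval contains $z_j$. Hence both stretched intervals contain $z_j$ and therefore intersect. For the converse I would argue through the geometry of the line, and the key structural fact to establish is that the stretched interval of $I(v)$ meets gadget $D_j$ if and only if $l(v)\le j\le r(v)$. This holds because $v$ is stretched leftward only in its leftmost gadget $D_{l(v)}$ and rightward only in its rightmost gadget $D_{r(v)}$, so its stretched interval occupies exactly the contiguous segment running from $D_{l(v)}$ to $D_{r(v)}$, and consecutive gadgets occupy disjoint, linearly ordered portions of the line.

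Assuming this structural fact, the converse follows quickly. Suppose the stretched intervals of $u$ and $v$ share a common point $p$. Every point of $\mathcal{V}$ lies either inside some gadget $D_j$ or is one of the additional points inserted between two consecutive gadgets. If $p\in D_j$, the structural fact gives $l(u)\le j\le r(u)$ and $l(v)\le j\le r(v)$, so both lie in $Q_j$ and are adjacent. If $p$ is the additional point between $D_j$ and $D_{j+1}$, then each stretched interval must reach from a gadget of index $\le j$ to one of index $\ge j+1$, forcing $l\le j$ and $r\ge j+1$ for both, so both again lie in $Q_j$ and are adjacent. Combining the two directions yields the claimed isomorphism.

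The step I expect to be the main obstacle is the converse: ruling out spurious intersections introduced by the stretching operation. Its crux is exactly the structural fact above, namely that a stretched interval touches $D_j$ only for indices $j\in[l(v),r(v)]$, together with the linear ordering of the gadget segments. I would be careful to verify the boundary cases $l(v)=j$, $j=r(v)$, and the single-gadget case $l(v)=j=r(v)$ explicitly when checking that an interval meeting $D_j$ contains $z_j$, since these endpoint-gadget situations are precisely where the leftward and rightward stretching is actually exercised. Notably, the additional points play no essential role in the isomorphism beyond confirming that non-adjacent vertices cannot meet at a separating point; their genuine purpose, as the text remarks, is reserved for the later exact-hitting-set arguments.
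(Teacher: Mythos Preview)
Your proposal is correct and follows essentially the same line as the paper's proof: both rest on the structural fact that the stretched interval of $I(v)$ occupies exactly the gadgets $D_{l(v)},\ldots,D_{r(v)}$, from which the adjacency-preservation follows. The paper's argument is terser---it simply asserts that the starting/ending gadgets match the starting/ending maximal cliques and that all intervening gadget points are covered, then concludes directly that $I(u)\cap I(v)\neq\emptyset$ iff the stretched intervals intersect---whereas you make the forward direction explicit via the zero point $z_j$ and split the converse into the in-gadget and between-gadget cases; this added detail is sound and does not constitute a different approach.
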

\begin{proof}
The gadgets $D_1, \ldots, D_t$ are arranged in the same order as the maximal cliques in the ordered set $\mathcal{O} = \{Q_1, Q_2 \ldots Q_t\}$. For each $v \in G$, the starting gadget (and the ending gadget) of interval $I(v)$ and the starting maximal clique (and the ending maximal clique) of vertex $v$ in $\mathcal{O}$ are the same by construction. Further, $I(v)$ contains all the points in the intervening gadgets between the starting and ending gadgets of $I(v)$ just as $v$ occurs in all the intervening maximal cliques between the starting and ending maximal cliques to which $v$ belongs to. It follows that $I(u)$ and $I(v)$ intersect if and only if the corresponding stretched intervals have a non-empty intersection. Thus the intersection graph of intervals in $H_G$ is isomorphic to $G$. 
\end{proof}
\noindent
\section{Exactly Hittable Interval Graphs} \label{sec:IntGraphEHR}
Characterizing simple graphs as intersection graphs is a well-pursued line of study in graph theory. \cite{Harary} had presented results on this problem in his book. We address the question of when a simple graph is the intersection graph of an exactly hittable hypergraph. We modify the proof given by Harary to answer this question. In addition, we present similar results for the class of chordal graphs (refer to Section \ref{sec:Prelims} for definition). We recall and prove Observation \ref{obs:UGEHH} about arbitrary graphs and arbitrary chordal graphs. 

\noindent
\ObsEH*
\begin{proof}
The proof of the first statement is based on a slight modification to the intersection model constructed from $G$ in Theorem 2.5 in the book by \cite{Harary}. Let $H = (\mathcal{V},\mathcal{E})$ be the intersection model constructed as follows. The universe $\mathcal{V}$ of the hypergraph is $V(G) \cup E(G)$.  The set $\mathcal{E}$ contains a hyperedge $E_v$ for each vertex $v \in V(G)$, and $E_v$ contains all the edges incident on $v$ and the element $v$. Clearly, the intersection graph of $H$ is isomorphic to $G$ and $V(G)$ is an exact hitting set of $H$. \\
\noindent
The proof of the second statement, which is for a chordal graph $G$, is similar and is as follows.  Since $G$ is a chordal graph let it be isomorphic to the intersection graph of some subtrees of a tree $T$. In particular, let $T$ be the clique tree of the chordal graph $G$ (\cite{Gol2004}). Let $\{T_v \mid v \in V(G)\}$ be the set of subtrees in $T$, where $T_v$ is the subtree associated with $v$ and the tree nodes in $T_v$ correspond to those maximal cliques in $G$ which contain the vertex $v$.  We modify $T$ to get $T'$ by adding $n = |V(G)|$ new nodes, each corresponding to a vertex in $V(G)$.  For each $v \in V(G)$, the new node corresponding to $v$ is made adjacent in $T$ to some node in $T_v$.  The resulting tree is $T'$ and $T'_v$ is the subtree of $T'$ consisting of $T_v$ and the new node corresponding to $v$.  Clearly, the newly added nodes form an exact hitting set of the set $\{T'_v \mid v \in V(G)\}$ in $T'$, and the intersection graph of the subtrees $\{T'_v \mid v \in G\}$ is the same as $G$.  
 \end{proof}

%Interestingly, the property of being exactly hittable is not universal for the class of interval graphs. There are interval graphs that do not have any exactly hittable intersection model. In this paper, we present a forbidden structure characterization for the class of interval graphs that have an exactly hittable intersection model. 
Interestingly, not every interval graph has an exactly hittable interval representation. 
%There are interval graphs that do not have any exactly hittable intersection model. 
In this paper, we present a forbidden structure characterization for the class of interval graphs that have an exactly hittable interval representation. 
In this section, we prove that every graph in $\mathcal{F}$ (see Definition 1) is a forbidden structure for $\ehig$. First, we state and prove one direction of Theorem \ref{thm:EHInterGraphs}. 

We use the following notations throughout the section. $H'$ denotes an interval representation of $G$. We denote the open neighbourhood of vertex $v$ by $N(v)$. $N(P)$ denotes open neighbourhood of all vertices in path $P$, excluding the vertices in $P$. $\mathcal{I}(P)$  denotes the set of intervals in $H'$ corresponding to vertices in path $P$, $X_{N(P)}$  denotes the set of independent vertices in $N(P)$ and $\mathcal{I}(X_{N(P)})$ denotes set of intervals in $H'$ corresponding to $X_{N(P)}$.

\begin{lemma} \label{lem:necFbIntGraph}
Let $G$ be an interval graph. Let $F \in \mathcal{F}$ be any forbidden structure. If $G$ contains $F$ as an induced subgraph, then $G$ is not an Exactly Hittable Interval Graph.
\end{lemma}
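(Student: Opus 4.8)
The plan is to argue by contradiction. Suppose $G$ were an Exactly Hittable Interval Graph, so that some interval representation $H'$ of $G$ admits an exact hitting set $T$ (a set of points meeting every interval of $H'$ in exactly one point). I would then derive a contradiction purely from the combinatorial structure that the induced copy of $\mathtt{F}$ forces on $H'$. Write the induced path as $P = v_1 v_2 \cdots v_k$, and let $X_{N(P)} = \{w_1, \ldots, w_m\}$ with $m \geq k+3$ be the independent set guaranteed by Definition~\ref{def:forbidPattern}. Since adjacency and non-adjacency are preserved in every interval representation, the intervals $\mathcal{I}(P)$ form a connected chain (consecutive intervals overlap while non-consecutive ones are disjoint, because $P$ is induced), so their union $J = \bigcup_i I(v_i)$ is a single interval, say $J = [a,b]$. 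Moreover the intervals $\mathcal{I}(X_{N(P)})$ are pairwise disjoint by independence, and each of them meets $J$ since every $w_j \in N(P)$ is adjacent to some $v_i$.

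The first key step is an upper bound: $|T \cap J| \leq k$. Every point of $J$ lies in some $I(v_i)$, and each $I(v_i)$ contains exactly one point of $T$; since no interval may contain two points of $T$, assigning each point of $T \cap J$ to an interval of $\mathcal{I}(P)$ containing it yields an injection into a $k$-element set.

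The second, and main, step is a geometric containment claim. Index the disjoint intervals of $\mathcal{I}(X_{N(P)})$ from left to right as $L_1 < L_2 < \cdots < L_m$. I would show that each of the $k+1$ interior intervals $L_2, \ldots, L_{k+2}$ lies entirely inside $J$. Since $L_1$ meets $J = [a,b]$ we have $r(L_1) \geq a$, hence $l(L_2) > r(L_1) \geq a$ and therefore $l(L_j) > a$ for all $j \geq 2$; symmetrically, $L_m$ meets $[a,b]$ forces $l(L_m) \leq b$, so $r(L_{k+2}) \leq r(L_{m-1}) < l(L_m) \leq b$ (using $k+2 \leq m-1$), giving $r(L_j) < b$ for all $j \leq k+2$. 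Thus $L_2, \ldots, L_{k+2} \subseteq (a,b) \subseteq J$. Each $L_j$ contains its own point $t_j \in T$, and these points are distinct because the $L_j$ are disjoint, so $t_2, \ldots, t_{k+2}$ are $k+1$ distinct points of $T$ lying in $J$, contradicting $|T \cap J| \leq k$. As $H'$ was an arbitrary representation of $G$, this shows $G$ has no exactly hittable interval representation.

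I expect the containment argument of the second step to be the crux: one must pin down that the two extreme independent intervals $L_1$ and $L_m$ only reach the boundary of $J$, while disjointness pushes the remaining $k+1$ of them strictly into the interior $(a,b)$, so that the endpoints $a$ and $b$ can be ``absorbed'' by at most the two extreme intervals and no more. The counting bound of the first step and the reduction to an arbitrary representation are then routine. As a sanity check, the base case $k=1$ recovers precisely the fact that $K_{1,4}$ is not exactly hittable, which matches the remark that $K_{1,4}$ is the smallest forbidden structure.
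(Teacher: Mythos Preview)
Your proof is correct and follows essentially the same approach as the paper: both argue by contradiction on an arbitrary exactly hittable representation, show that at most two of the $\geq k+3$ independent intervals can protrude beyond the union $J$ of the path intervals (so at least $k+1$ of them contribute distinct hitting points inside $J$), and then apply pigeonhole since $J$ is covered by $k$ path intervals each containing at most one hitting point. Your left-to-right ordering $L_1 < \cdots < L_m$ simply makes precise what the paper states in one sentence (``at most two intervals \ldots have their endpoints outside the union \ldots one on either side of $P$'').
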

\begin{proof}
Our proof is by contradiction. Let $H'$ be any exactly hittable interval representation of $G$  and let $G$ contain $F\in \mathcal{F}$ as an induced subgraph.  Let $F$ contain $P$, an induced path of length $k$ in $G$ that has an independent set of at least $k+3$ vertices in its  neighbourhood. Let $S$ be an exact hitting set of $H'$. Recall that $\mathcal{I}(P)$ denotes the set of intervals in $H'$ corresponding to vertices in path $P$. By our assumption that $G$ contains $F$, the number of intervals in $\mathcal{I}(X_{N(P)})$ is at least $k+3$. Hence $|\mathcal{I}(X_{N(P)}) \cap S| \geq k+3$. Since $X_{N(P)}$ is an independent set,  there can be at most two intervals in $\mathcal{I}(X_{N(P)})$ that have at least one endpoint each outside the union of intervals in $\mathcal{I}(P)$ - one on either side of $P$. Therefore, even if these two intervals in $\mathcal{I}(X_{N(P)})$ are hit outside the intervals in $\mathcal{I}(P)$ at either ends, the remaining $k+1$ independent intervals have to be hit inside the union of intervals in $\mathcal{I}(P)$. Hence $|\mathcal{I}(P) \cap S| \geq k+1$. But there are only $k$ intervals inside $\mathcal{I}(P)$. Therefore, by the pigeonhole principle, at least one interval among the intervals in $\mathcal{I}(P)$ has to be hit more than once. Thus $S$ cannot be an exact hitting set of $H'$. We have arrived at a contradiction to the assumption that  $H'$ is  exactly hittable. Since we started with an arbitrary exactly hittable representation and arrived at a contradiction, we conclude that $G$ is not exactly hittable.
\end{proof}

%\textcolor{blue}{**************** The new content begins here ****************}

\noindent Now, we prove the other direction of Theorem \ref{thm:EHInterGraphs}, i.e, an interval graph $G$ which contains no graph from the set $\mathcal F$ as an induced subgraph is exactly hittable. %Prior to that, we first define a new set of cliques obtained from a linear ordering of maximal cliques in $G$ and prove an associated set of properties.
\noindent Let $\mathcal{O} = \{Q_1, Q_2 \ldots Q_t\}$ be a linear ordering of maximal cliques in $G$ (refer Theorem \ref{thm:linord} and Section \ref{sec:canonRep}). Let $H_G$ be the canonical interval representation of $G$ obtained from $\mathcal{O}$. We use the following notations in this section. We denote a minimum clique cover of the neighbourhood of a vertex $v$, which is formed by the minimum number of maximal cliques in $\mathcal{O}$, by $C(N[v])$. Recall that a clique cover for a vertex set $S$ is a set of cliques such that each vertex in $S$ appears in at least one clique. Note that such a clique cover exists. 

\noindent
We prove a simple observation here.
\begin{observation}\label{obs:lastclique}
If $Q_i\ldots Q_j, ~i,j\in[1,t], i\leq j$ denote the maximal cliques containing vertex $v\in V$, then $Q_j\in C(N[v])$.
\end{observation}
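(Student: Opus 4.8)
The plan is to prove the robust form of the statement: \emph{there exists} a minimum clique cover of $N[v]$ drawn from the cliques of $\mathcal{O}$ that contains $Q_j$, which is what legitimises designating such a cover as $C(N[v])$ with $Q_j \in C(N[v])$. (One should read the claim this way rather than as ``every minimum cover contains $Q_j$'', since a vertex to the right of $Q_j$ can sometimes be covered by a later clique, allowing an alternative minimum cover that omits $Q_j$.) First I would pin down the set being covered: by Theorem~\ref{thm:linord} the cliques containing any fixed vertex are consecutive in $\mathcal{O}$, so a vertex $u$ lies in $N[v]$ exactly when it shares a maximal clique with $v$, i.e.\ when $u \in Q_m$ for some $i \le m \le j$. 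Hence $N[v] = Q_i \cup Q_{i+1} \cup \cdots \cup Q_j$ as a vertex set, and in particular $Q_j \subseteq N[v]$.

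The key structural step is to show that no clique strictly to the right of $Q_j$ covers anything in $N[v]$ beyond what $Q_j$ already covers. Concretely, for $m > j$ I would argue $Q_m \cap N[v] \subseteq Q_j$: if $u \in Q_m \cap N[v]$, then by the previous paragraph $u$ also lies in some $Q_{m'}$ with $m' \le j$, so the consecutive range of cliques containing $u$ runs from an index at most $j$ to an index at least $m > j$, and therefore includes $j$; thus $u \in Q_j$. This says $Q_j$ ``dominates'' every $Q_m$ with $m>j$ for the purpose of covering $N[v]$.

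Next I would exhibit a vertex that forces any cover to use a clique of index at least $j$. If $i=j$, then $v$ itself has $Q_j$ as its leftmost (and only) clique. If $i<j$, then $Q_{j-1}$ and $Q_j$ are distinct maximal cliques, so neither contains the other and $Q_j \setminus Q_{j-1}$ is nonempty; any $w$ in it satisfies $l(w)=j$ and $w \in Q_j \subseteq N[v]$. In either case there is a vertex $w \in N[v]$ that is contained only in the cliques $Q_j, Q_{j+1}, \ldots, Q_{r(w)}$, all of index at least $j$, and hence every clique cover of $N[v]$ must contain some $Q_p$ with $p \ge j$.

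The argument then closes by an exchange step. Take any minimum clique cover $\mathcal{C}$ of $N[v]$; by the witness $w$ it contains some $Q_p$ with $p \ge j$. If $p=j$ we are done. If $p>j$, replace $Q_p$ by $Q_j$; by the domination fact the set $Q_p \cap N[v]$ it was responsible for is $\subseteq Q_j$, so the modified family still covers $N[v]$ and has size no larger than $\mathcal{C}$, hence is again a minimum cover, now containing $Q_j$. I expect the main obstacle to be the domination step (that $Q_m \cap N[v] \subseteq Q_j$ for $m>j$), since it is where the consecutive-clique property of Theorem~\ref{thm:linord} does the real work; the witness-existence step is then a short consequence of the distinctness of consecutive maximal cliques, and the exchange step is routine.
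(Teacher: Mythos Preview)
Your proposal is correct and shares the paper's central idea: the witness vertex $w \in Q_j \setminus Q_{j-1}$ (or $v$ itself when $i=j$) forces any cover to use some clique of index at least $j$. The paper's proof stops there, arguing by contradiction that if $Q_j \notin C(N[v])$ then $w$ is uncovered, since $w$ lies in no clique of index $<j$.

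Where you diverge is in rigor. The paper's argument tacitly assumes the cover only uses cliques $Q_m$ with $m \le j$, so that ``not in $Q_j$ and not in anything earlier'' forces $w$ to be uncovered. But as you observe, $w$ may also lie in $Q_{j+1}, Q_{j+2}, \ldots$, and nothing in the paper's setup bars those from appearing in a minimum cover of $N[v]$; indeed one can build examples where a minimum cover uses $Q_{j+1}$ in place of $Q_j$. Your domination step ($Q_m \cap N[v] \subseteq Q_j$ for $m>j$) together with the exchange argument closes this gap cleanly and yields the existential reading that the paper actually needs downstream. So: same witness, but your route is the sound one.
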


\begin{proof} 
% We  prove here that, $Q_j$, the last maximal clique containing the vertex $v$  has a vertex $u$ such that $u$ does not belong to $Q_i\ldots Q_{j-1}$.
We prove this by contradiction. Let us assume that $Q_j \notin C(N[v])$.  As $Q_j\neq Q_{j-1}$,  there exists a vertex $u$ in $Q_j$ which is not in $Q_{j-1}$.  It follows that $u$ is not contained in any maximal cliques that occur before 
%previous to 
$Q_{j-1}$ in $\mathcal{O}$ since  the maximal cliques containing a vertex occur consecutively in the linear ordering of maximal cliques of an interval graph.
%It follows that $u$ is not contained in \textcolor{blue}{any maximal clique $Q_w$ such that $Q_w \prec Q_{j-1}$} since  the maximal cliques containing a vertex occur consecutively in the linear ordering of maximal cliques of an interval graph. 
Therefore, if  $Q_j \notin C(N[v])$, then $u$ is not covered. It contradicts the fact that $ C(N[v])$ is a  clique cover of $N[v]$. It follows that $Q_j\in C(N[v])$. 
\end{proof}

From now on, when we refer to a minimum clique cover of the input graph, we mean a minimum clique cover formed by the minimum number of maximal cliques in $\mathcal{O}$ unless specified otherwise. Let $| C(N[v]) |$ denote the number of cliques in $C(N[v])$. Similarly, we denote a minimum clique cover of vertices in the maximal cliques  $Q_i$ to $Q_j$ in the ordering $\mathcal{O}$, $i < j$, by $C(Q_i, \ldots, Q_j)$. 
 
Our proof is based on the structural properties of a path $P$ in $G$, the construction of which is presented in Algorithm \ref{algo:constructP}.  The structural properties of path $P$ are proved as lemmas later in the section.

\subsection*{Outline of Algorithm \ref{algo:constructP}:}

\noindent We construct an induced path $P$ which contains a minimal set of vertices from graph $G$. The vertices in path $P$ are selected such that every maximal clique in $\mathcal{O}$ has a non-empty intersection with path $P$. Further, we incrementally construct a clique cover  of $G$ by taking the clique cover of the closed neighbourhood of each of the individual vertices in $P$. %We use  $CC$ to denote this clique cover. 

 \begin{figure}[b]
\centering
\begin{tikzpicture}[scale=0.8, every node/.style={scale=0.8}]
\draw(0.3,2)--(4,2);
\draw(3.1,0.5)--(9,0.5);

\draw (1,1) ellipse (0.5cm and 1.75cm);
\draw[very thick,dotted] (2,1)--(3,1);
\draw (4,1) ellipse (0.5cm and 1.75cm);

\draw (6,1) ellipse (0.5cm and 1.75cm);
\draw[very thick,dotted] (7,1)--(8,1);
\draw (9,1) ellipse (0.5cm and 1.75cm);
\draw (11,1) ellipse (0.5cm and 1.75cm);
\draw[very thick,dotted] (12,1)--(13,1);
\draw (14,1) ellipse (0.5cm and 1.75cm);

%\draw(3,-1) node{$K_i$};
\draw(1,-1) node{$Q_{r}^{i-2}$};

\draw(4,-1) node{$Q_{r}^{i-1}$};
\draw(6,-1) node{$Q_{r+1}^{i-1}$};
\draw(9,-1) node{$Q_r^i$};
\draw(14,-1) node{$Q_t$};
%\draw(9,-1) node{$K_{i+3}$};
%\draw(11,-1) node{$K_{i+4}$};
%\draw(14,-1) node{$K_\alpha$};
\draw(2,2.2) node{$v_{i-1}$};
\draw(8,.3) node{$v_{i}$};

%\draw(5,-1) node{$z_m$};
%\draw(7,-1) node{$z_t$};
%\draw(8,-0.75) node{$r'$};
%\draw(9,-0.75) node{$r'+1$};
\end{tikzpicture}
\label{fig:AlgoPfigure}
\caption{Construction of path $P$}
\end{figure}
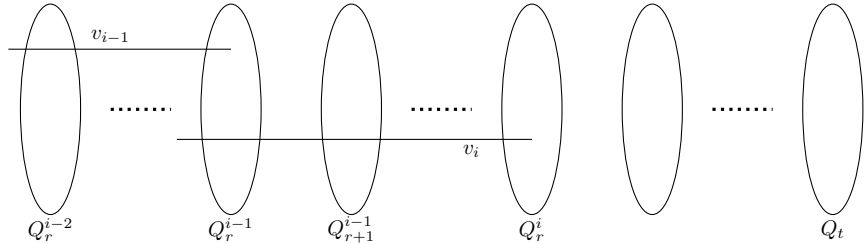

\begin{algorithm}[t]
\caption{Construction of path $P$ and computation of clique cover\\
\textbf{\textit{Input}}: An interval graph $G$ with a linear ordering of maximal cliques $\mathcal{O} = \{Q_1, Q_2 \ldots Q_t\}$\\
\textbf{\textit{Output}}: Path $P$}
\label{algo:constructP}
\begin{algorithmic}[1]
%Input : Linear ordering of maximal Cliques $\mathcal{O}$
%\FOR, \ENDFOR
%\STATE
%\IF, \ENDIF
%\RETURN $C_{cf}, q_{min}$ \hfill $\blacktriangleright$ \textit{\small{comment}}
\STATE $i = 1$ 
\STATE $v_1 \gets $ \texttt{Interval in $Q_1$ with largest right endpoint}
\STATE $P \gets v_1$
\STATE $Q_r^1 = $ \texttt{Maximal clique in which $v_1$ ends}
\STATE $C(N[v_1]) = $ \texttt{Minimum clique cover of $N[v_1]$}
\STATE $Q_{r'}^1 = $ \texttt{Maximal clique immediately preceding $Q_r^1$ in $C(N[v_1])$
\STATE $CC(N[v_1]) = C(N[v_1])$}
\WHILE {$Q_r^i\neq Q_t$}
\STATE  $i = i + 1$
\STATE $v_i = $ \texttt{Interval $I\in Q_r^{i-1}\setminus Q_{r'}^{i-1}$ which has largest right endpoint; If there are more than one such vertex, then the one with the smallest left endpoint is chosen}
\STATE $P \gets P \cup v_i$
\STATE $Q_r^i$ = \texttt{Maximal clique in which $v_i$ ends}
\STATE $CC(N[v_1,\dots, v_i]) = CC(N[v_1,\dots,v_{i-1}]) \cup C(Q_{r+1}^{i-1},\dots, Q_r^i)$, \texttt{where} $Q_{r+1}^{i-1}$ \texttt{is the maximal clique immediately succeeding $Q_r^{i-1}$ in $\mathcal{O}$} \label{step:CC}
%\STATE $ Q_{r}^i \gets Q_{r''}$
\STATE $ Q_{r'}^i $ = \texttt{Maximal clique immediately preceding $Q_r^i$ in $CC(N[v_1,\dots, v_i])$}
%\STATE $j = r''$
\ENDWHILE
\STATE $\mathcal{K} = CC(N[v_1,\dots, v_i]) $ \label{step:assign}
\STATE return $P$
\end{algorithmic}
\end{algorithm}

 Let $\{v_1, v_2, \dots, v_p\}$ be the ordered set of vertices in the constructed path $P$  with respect to the linear ordering $\mathcal{O}$. %, constructed by Algorithm \ref{algo:constructP}. 
 Let $v_i, \dots, v_j, 1\leq i \leq j \leq p$ be  any subset of  vertices  in path $P$. We use $CC(N[v_i,v_{i+1},\ldots,v_j]),~ i \leq j$ to denote a clique cover  of $(N[v_i]\cup N[v_{i+1}]\cup\ldots\cup N[v_j])$ and $|CC(N[v_i,v_{i+1},\ldots,v_j])|$ to denote the number of cliques in $CC(N[v_i,v_{i+1},\ldots,v_j])$.  Note that $CC(N[v_i, \ldots, v_j])$ is a clique cover of graph $G$  when $i=1, j=p$. %However, there exists graphs for which it is not  a minimum clique cover. %The maximal clique just to the right of $Q_{r}^{i-1}$ in the linear ordering $\mathcal{O}$ is denoted as $Q_{r+1}^{i-1}$ as shown in Figure \ref{fig:AlgoPfigure}. We construct the clique cover $CC(N[v_1,\ldots,v_i])$  by taking the union of $CC(N[v_1,\dots,v_{i-1}])$ and  $C(Q_{r+1}^{i-1},\dots, Q_r^i)$. The loop is continued till $Q_t$. 
Thus obtained clique cover of $G$, $CC(N[v_1, \ldots, v_p])$, is stored in $\mathcal{K}$. We denote the maximal cliques which constitute $\mathcal{K}$  in the order in which they appear in $CC(N[v_1,\dots, v_p])$,  by  $K_1, K_2,\ldots, K_{\alpha'}$. Here the notation $\alpha'$ is used to indicate that $\mathcal{K}$ is a minimal clique cover of $G$ rather than a minimum clique cover.\\

\noindent  In any perfect graph, the size of a minimum clique cover  equals the size of a maximum independent set. Based on this and the fact that interval graphs are perfect graphs, we state an observation which we use in proving some important properties of the constructed clique  cover of the neighbourhood of vertices in path $P$.
\begin{observation}\label{obs:minCCindSet}
In any perfect graph $G'$, for each maximal clique $K$ in a minimum clique cover $\mathcal{K}$ of $G'$, there exists a vertex $u \in K$ such that $u$ does not belong to any other maximal clique in $\mathcal{K}$.
\end{observation}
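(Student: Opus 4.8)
The cleanest route I would take rests on the \emph{minimality} of $\mathcal{K}$ alone and does not even invoke perfectness. Suppose, for contradiction, that some clique $K \in \mathcal{K}$ has no private vertex, i.e.\ every vertex of $K$ also lies in some other clique of $\mathcal{K}$. Then every vertex covered by $K$ remains covered by $\mathcal{K} \setminus \{K\}$, and every vertex outside $K$ is unaffected, so $\mathcal{K} \setminus \{K\}$ is still a clique cover of $G'$. But it has one fewer clique than $\mathcal{K}$, contradicting the assumption that $\mathcal{K}$ is a minimum clique cover. Hence each $K \in \mathcal{K}$ must contain a vertex $u$ belonging to no other clique of $\mathcal{K}$.

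Since the authors flag the identity that a minimum clique cover and a maximum independent set have the same size in a perfect graph, I would also record the alternative argument that uses it, as it pinpoints the private vertex explicitly. Fix a maximum independent set $I$ with $|I| = \alpha(G') = |\mathcal{K}|$. Because $I$ is edgeless while every $K \in \mathcal{K}$ is complete, each clique meets $I$ in at most one vertex. Now double-count the incidences $(K,u)$ with $u \in K \cap I$: summing over the $|\mathcal{K}| = \alpha(G')$ cliques gives at most $\alpha(G')$, while summing over the $|I| = \alpha(G')$ vertices, each covered at least once, gives at least $\alpha(G')$. Both bounds are therefore tight, forcing every clique to contain exactly one vertex of $I$ and every vertex of $I$ to lie in exactly one clique. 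The unique representative $u_K \in K \cap I$ is then the desired private vertex.

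I expect the only delicate point to be the step that promotes \emph{at least one} to \emph{exactly one} in the second argument: the cover condition gives each vertex of $I$ at least one containing clique, and it is the perfect-graph equality $|\mathcal{K}| = \alpha(G')$, fed through the double count, that upgrades this to exactly one and thereby guarantees privacy. In the first argument no such subtlety arises, which is why I would present the removal argument as primary and the independent-set argument as an illuminating complement consistent with the surrounding exposition.
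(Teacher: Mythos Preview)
Your proposal is correct. The paper does not actually supply a proof of this observation; it merely precedes the statement with the remark that in a perfect graph the size of a minimum clique cover equals the size of a maximum independent set and says the observation is ``based on this.'' Your second argument, via the equality $|\mathcal{K}|=\alpha(G')$ and the double count of incidences between cliques of $\mathcal{K}$ and vertices of a maximum independent set, is precisely the argument the paper is gesturing at. Your first argument is a genuinely different and more elementary route: it uses only the minimality of $\mathcal{K}$ as a cover and works for arbitrary graphs, not just perfect ones. What the perfectness-based argument buys is an explicit identification of the private vertex (the unique element of $K\cap I$), which is convenient later in the paper when the authors extract an independent set of size $|\mathcal{K}|$ from the cover; your removal argument guarantees existence of a private vertex but does not hand you an independent transversal directly.
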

%\begin{proof}

%\end{proof}
%\noindent Next we prove some important properties of  path $P$.

\begin{lemma}\label{lem:forbonefour}
%The minimum clique cover of $N(v_i), i\in [1,p]$ is at most 3.
For $1 \leq i \leq p$, $| C(N[v_i])|~\leq 3$. 
\end{lemma}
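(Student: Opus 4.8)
The plan is to exploit the hypothesis of this direction of Theorem~\ref{thm:EHInterGraphs} (that $G$ contains no member of $\mathcal{F}$ as an induced subgraph) together with the fact that interval graphs are perfect. The key observation is that a single vertex $v_i$ is itself an induced path on $k=1$ vertex. Hence, by Definition~\ref{def:forbidPattern}, if the open neighbourhood $N(v_i)$ contained an independent set of size $k+3 = 4$, then $G$ would contain a forbidden structure from $\mathcal{F}$, contradicting the hypothesis. So the maximum independent set inside $N(v_i)$ has size at most $3$.

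First I would pass from the open neighbourhood to the closed neighbourhood $N[v_i]$. Any independent set of $G[N[v_i]]$ that uses $v_i$ has size exactly $1$, since $v_i$ is adjacent to every other vertex of $N[v_i]$; every independent set avoiding $v_i$ lies entirely within $N(v_i)$ and hence has size at most $3$ by the previous paragraph. Therefore the maximum independent set of $G[N[v_i]]$ has size at most $3$.

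Next I would invoke perfection. Since $G$ is an interval graph it is perfect, and so is the induced subgraph $G[N[v_i]]$. By the fact recalled just before Observation~\ref{obs:minCCindSet}, the size of a minimum clique cover of $G[N[v_i]]$ equals the size of its maximum independent set, which we have just bounded by $3$. It then remains to verify that such a minimum clique cover is eligible to be counted by $C(N[v_i])$, i.e.\ that it can be realised using only maximal cliques drawn from the linear ordering $\mathcal{O}$. This holds because every maximal clique of $G[N[v_i]]$ is of the form $Q_j \cap N[v_i]$ for some $Q_j \in \mathcal{O}$: any clique of $G[N[v_i]]$ extends to a clique of $G$, which lies inside some maximal clique $Q_j$, and $Q_j \cap N[v_i]$ is then a clique of $G[N[v_i]]$ containing it, so by maximality it coincides with it. Consequently $| C(N[v_i]) | \leq 3$, as required.

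The open-to-closed neighbourhood step and the perfection argument are routine. The one point deserving care --- and the likely main obstacle --- is the final step: arguing that a minimum clique cover of $G[N[v_i]]$ can always be taken to consist of maximal cliques from $\mathcal{O}$, so that the quantity $C(N[v_i])$ (the minimum number of $\mathcal{O}$-cliques covering $N[v_i]$) does not exceed the abstract clique cover number. This relies on the structural property, special to chordal and interval graphs, that the maximal cliques of an induced subgraph are restrictions of the global maximal cliques.
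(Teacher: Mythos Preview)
Your proof is correct and follows essentially the same idea as the paper's: both rest on the equality, for perfect graphs, between the independence number and the clique cover number of $G[N[v_i]]$, together with the fact that four independent neighbours of $v_i$ would yield a $K_{1,4}\in\mathcal{F}$. The paper phrases this by contradiction via Observation~\ref{obs:minCCindSet}, whereas you argue directly and add the (correct, and easier than you fear) observation that any clique of $G[N[v_i]]$ extends to a maximal clique of $G$ and hence lies in $\mathcal{O}$.
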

\begin{proof} 
The proof is by contradiction. Let  $\mid C(N[v_i])\mid > 3$. By definition, $C(N[v_i])$  contains only the maximal cliques from the linear ordering $\mathcal{O}$. From Observation \ref{obs:minCCindSet}, it follows that for each maximal  clique $Q \in C(N[v_i])$, there exists a vertex $w$ which is unique to $Q$. Since $| C(N[v_i])|~ > 3$, there exists at least $4$ such vertices each belonging to different maximal cliques in $C(N[v_i])$. Let those vertices be denoted as  $w_1,w_2,w_3,w_4$.  We can easily see that the vertices $w_1,w_2,w_3,w_4$ form an independent set, since each of them belong only to their respective maximal cliques. It follows that $v_i$ together with $w_1,w_2,w_3,w_4$  form a forbidden structure  $K_{1,4}$ (refer Fig. \ref{fig:fbIntGraphs} (i)). This is a contradiction to our premise that $G$ does not contain any forbidden structure. 
 \end{proof}
 
%\noindent \textit{Note: } You may refer Algorithm \ref{algo:constructP} for the notations used in the proofs of lemmas given below:
 \begin{lemma}\label{lem:forbtwofive}
In the path $P$, if for any vertex $v_i$, $1\leq i < p$ ,  $| C(N[v_i])|~= 3$, then $| C(N[v_{i+1}])|~\leq 2$. 
\end{lemma}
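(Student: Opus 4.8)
My plan is to argue by contradiction. Since Lemma \ref{lem:forbonefour} already gives $|C(N[v_{i+1}])| \le 3$, I would assume $|C(N[v_i])| = 3$ and $|C(N[v_{i+1}])| = 3$ and derive a member of $\mathcal{F}$ as an induced subgraph, contradicting the hypothesis that $G$ is $\mathcal{F}$-free. First I would record that $v_i$ and $v_{i+1}$ are adjacent, since both lie in $Q_r^i$ by the choice of $v_{i+1}$ in Algorithm \ref{algo:constructP}; hence $P' = v_i v_{i+1}$ is an induced path on $k=2$ vertices, and to contradict $\mathcal{F}$-freeness it suffices to exhibit an independent set of size $k+3 = 5$ inside $N(P') = (N(v_i) \cup N(v_{i+1})) \setminus \{v_i, v_{i+1}\}$. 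Using that interval graphs are perfect together with Observation \ref{obs:minCCindSet}, each of $N[v_i]$ and $N[v_{i+1}]$ has a private vertex for every clique of its size-$3$ cover, which yields independent triples $A = \{a_1, a_2, a_3\}$ in $N(v_i)$ and $B = \{b_1, b_2, b_3\}$ in $N(v_{i+1})$, ordered left to right along $\mathcal{O}$.

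Next I would pin down the geometry of these six intervals. By Observation \ref{obs:lastclique} the clique $Q_r^i$ in which $v_i$ ends is the last clique of $C(N[v_i])$, so its private vertex $a_3$ contains $Q_r^i$; since $v_{i+1}$ also lies in $Q_r^i$, this shows $a_3 \in N(v_{i+1})$ and, crucially, that $a_3 \in Q_r^i \setminus Q_{r'}^i$. The selection rule for $v_{i+1}$ (largest right endpoint in $Q_r^i \setminus Q_{r'}^i$) then forces the right endpoint of $a_3$ to lie at or before that of $v_{i+1}$, which confines $a_3$ on the right. The two leftmost members $a_1, a_2$ end strictly before $Q_r^i$, and $a_2$ is contained in the clique-span of $v_i$; the mirror statements hold for $b_1$ (leftmost, reaching farthest left into the span of $v_{i+1}$) and for $b_2, b_3$. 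The purpose of this bookkeeping is that the only intervals that can induce adjacencies between the $A$-side and the $B$-side are those meeting the overlap region of $v_i$ and $v_{i+1}$, i.e. the cliques common to both.

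Finally I would assemble the size-$5$ independent set from $A \cup B$. The conflict graph between $A$ and $B$ is bipartite, so it is enough to show that deleting a single interval destroys every cross-edge; the two candidates are $\{a_1, a_2, a_3, b_2, b_3\}$ (which works when $a_3$ avoids $b_2$) and $\{a_1, a_2, b_1, b_2, b_3\}$ (which works when $b_1$ avoids $a_2$). The hard part, and the real content of the lemma, sits exactly here: in the overlap region both $a_3$ and $b_1$ can reach across, and I must rule out the ``double crossing'' in which $a_3$ meets a $B$-interval while $b_1$ simultaneously meets an $A$-interval, because that configuration leaves only four independent vertices. I expect this to be the main obstacle, and I would attack it using the extremal choice of $v_{i+1}$ (largest right endpoint in $Q_r^i \setminus Q_{r'}^i$, which bounds how far $a_3$ can extend) together with the minimality encoded in $Q_{r'}^i$ and the normalization of Observation \ref{obs:imp1}, aiming to show that at most one side can cross the overlap, so one of the two candidate $5$-sets is always independent. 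The resulting graph is a member of $\mathcal{F}$, contradicting $\mathcal{F}$-freeness and forcing $|C(N[v_{i+1}])| \le 2$.
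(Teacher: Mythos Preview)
Your setup is right and matches the paper: assume $|C(N[v_i])|=|C(N[v_{i+1}])|=3$, use the edge $v_iv_{i+1}$ as a two–vertex path, and look for an independent set of size $5$ in its neighbourhood. The divergence is in how that independent set is produced, and this is exactly where your proposal has a gap. You extract two independent triples $A\subseteq N[v_i]$ and $B\subseteq N[v_{i+1}]$ separately and then try to merge them; as you yourself flag, the ``double crossing'' configuration (where $a_3$ reaches into the $B$-side while $b_1$ reaches into the $A$-side) is not excluded by your argument, and the tools you list (the extremal choice of $v_{i+1}$, Observation~\ref{obs:imp1}) are not obviously enough to rule it out. So the crux of the lemma is left open.

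The paper avoids this obstacle entirely by working one level up, with clique covers rather than with handpicked independent vertices. It does a short case analysis on $|C(N[v_i])\cup C(N[v_{i+1}])|$. Since $v_{i+1}\in Q_r^i$ and $Q_r^i$ is the last clique in $C(N[v_i])$ (Observation~\ref{obs:lastclique}), one may take $Q_r^i\in C(N[v_{i+1}])$ as well, so the union has size $4$ or $5$. If the size is $4$, then $C(N[v_{i+1}])$ contributes only one clique beyond $Q_r^i$, forcing $|C(N[v_{i+1}])|\le 2$, a contradiction. If the size is $5$, the paper argues that $C(N[v_i])\cup C(N[v_{i+1}])$ is a \emph{minimum} clique cover of $N[v_i]\cup N[v_{i+1}]$; since interval graphs are perfect, this immediately gives an independent set of size $5$ in $N[v_i]\cup N[v_{i+1}]$ via Observation~\ref{obs:minCCindSet}, and hence a member of $\mathcal F$. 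No merging of triples, and no crossing analysis, is needed.

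In short: replace your ``combine $A$ and $B$'' step by the case split on $|C(N[v_i])\cup C(N[v_{i+1}])|$ and the single application of $\alpha=\overline{\chi}$ on $N[v_i]\cup N[v_{i+1}]$; that closes the gap you identified.
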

\begin{proof}
The proof is by contradiction. Assume that there exists a vertex $v_i \in P,~1\leq i \leq p$ for which $|C(N[v_i])| = 3 $ and $| C(N[v_{i+1}])| \geq 3$. Also note that by Lemma \ref{lem:forbonefour}, $| C(N[v_{i+1}])|$ cannot exceed $3$.
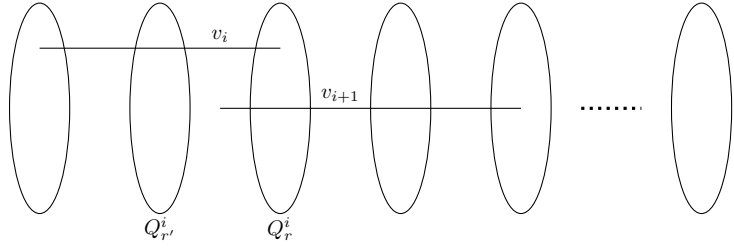
\begin{figure}[h!]
\centering
\begin{tikzpicture}[scale=0.8, every node/.style={scale=0.8}]
\draw(3,2)--(7,2);
\draw(6,1)--(11,1);

\draw (3,1) ellipse (0.5cm and 1.75cm);
\draw (5,1) ellipse (0.5cm and 1.75cm);
\draw (7,1) ellipse (0.5cm and 1.75cm);
\draw (9,1) ellipse (0.5cm and 1.75cm);
\draw (11,1) ellipse (0.5cm and 1.75cm);
\draw[very thick,dotted] (12,1)--(13,1);
\draw (14,1) ellipse (0.5cm and 1.75cm);

%\draw(3,-1) node{$K_i$};
\draw(5,-1) node{$Q_{r'}^i$};

\draw(7,-1) node{$Q_r^i$};
%\draw(9,-1) node{$K_{i+3}$};
%\draw(11,-1) node{$K_{i+4}$};
%\draw(14,-1) node{$K_\alpha$};
\draw(6,2.2) node{$v_i$};
\draw(8,1.2) node{$v_{i+1}$};

%\draw(5,-1) node{$z_m$};
%\draw(7,-1) node{$z_t$};
%\draw(8,-0.75) node{$r'$};
%\draw(9,-0.75) node{$r'+1$};
\end{tikzpicture}
\label{fig:ProofFig2}
\caption{Forbidden structure formation}
\end{figure}
Vertices $v_i$ and $v_{i+1}$, being adjacent,  form an edge in the  path $P$.  
We consider the following cases  based on the cardinality of $C(N[v_i]) \cup C(N[v_{i+1}])$. \\

\noindent \textbf{Case when $\mid C(N[v_i]) \cup~ C(N[v_{i+1}])\mid = 4$}: Recall from Algorithm \ref{algo:constructP}  that $Q_r^{i-1}$ and $Q_r^i$ are the maximal cliques in the ordering $\mathcal{O}$ which contains the right endpoints of the intervals corresponding to $v_{i-1}$ and $v_i$ respectively. For any vertex $v_i \in P,~  Q_r^{i-1} \in C(N[v_i])$ and $Q_r^i \in C(N[v_i])$.  By our choice of $v_{i+1}$ in the  construction of path $P$, $v_{i+1} \in \{Q_r^i\setminus Q_{r'}^i\}$. Thus $v_{i+1}$  is covered by $Q_r^i$ in $C(N[v_i])$. As per our assumption, $\mid C(N[v_i]) \cup C(N[v_{i+1}])\mid~= 4$, and by our premise, $\mid C(N[v_i])\mid~= 3$.  Therefore those vertices of $N[v_{i+1}]$ which are not covered by $Q_r^i$ have to be covered by exactly one more clique. It follows that $\mid C(N[v_{i+1}])\mid~= 2$, %since $v_{i+1}$ is chosen such that it is present only in the last maximal clique in $C(N[v_i])$, 
which is a contradiction to our assumption that $\mid C(N[v_{i+1}])\mid~=3$. This, in turn, is a contradiction to our initial premise that $\mid C(N[v_i]) \cup~ C(N[v_{i+1}])\mid = 4$. Thus the only possibility is, $\mid C(N[v_i]) \cup C(N[v_{i+1}])\mid = 5$, which we discuss in the next case.  Observe that $\mid C(N[v_i]) \cup C(N[v_{i+1}])\mid$ cannot be greater than 5 since $v_{i+1}\in Q_r^i$ and $ \mid C(N[v_{i+1}])\mid~= 3$.  Therefore, if $\mid C(N[v_i]) \cup C(N[v_{i+1}])\mid$ is greater than 5, then those vertices in $N[v_{i+1}]$ which are not covered in $Q_r^i$  will be covered by 3 maximal cliques, which would make $\mid C(N[v_{i+1}])\mid = 4$. This is again a contradiction to the assumption that $ \mid C(N[v_{i+1}])\mid~= 3$.\\

\noindent \textbf{Case when $\mid C(N[v_i]) \cup C(N[v_{i+1}])\mid = 5$}: The proof is by contradiction to our premise that $G$ does not contain any forbidden structure. We  first show that  $C(N[v_i]) \cup C(N[v_{i+1}])$ is indeed a minimum clique cover of $N[v_i]\cup N[v_{i+1}]$. Then, using Observation \ref{obs:minCCindSet}, we show that there exists a forbidden structure. By definition, $C(N[v_i])$ is a minimum clique cover of $N[v_i]$. Therefore,  each of the three maximal cliques in $C(N[v_i])$ has atleast one  unique vertex which does not belong to any other maximal clique. Since $v_{i+1}\in Q_r^i$ and $Q_r^i\in C(N[v_i])$, $Q_r^i\in C(N[v_{i+1}])$. Let the other two maximal cliques in $C(N[v_{i+1}]$ be $Q_j$ and $Q_k$. By  Observation \ref{obs:minCCindSet}, $Q_j$ and $Q_k$ contain a unique vertex each. It follows that any minimum clique cover of $N[v_i]\cup N[v_{i+1}]$ contains all three maximal cliques of $C(N[v_i])$ along with $Q_j$ and $Q_k$. Hence $C(N[v_i]) \cup C(N[v_{i+1}])$  is a minimum clique cover of  $N[v_i]\cup N[v_{i+1}]$. By  Observation \ref{obs:minCCindSet}, on $C(N[v_i]) \cup~ C(N[v_{i+1}])$, there is a set $V'$ of  $5$ vertices in $C(N[v_i]) \cup~ C(N[v_{i+1}])$ that are mutually disjoint and form an independent set of size five. The edge $(v_i, v_{i+1})$,  together with $V'$ form a forbidden structure (see Definition \ref{def:forbidPattern}). Thus we have arrived at a  contradiction. Therefore, if $\mid C(N[v_i])\mid~= 3$, then $\mid C(N[v_{i+1}])\mid~\leq 2$.
 \end{proof}
 
 \begin{observation}\label{obs:ccatleast2}
 For each vertex $v\in P\setminus v_p$, $\mid C(N[v])\mid \geq 2$.
 \end{observation}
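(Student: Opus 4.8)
The plan is to reduce the claim to a single membership fact. For a path vertex $v_i$ with $i<p$, let $Q_r^i$ be the last clique of $\mathcal{O}$ containing $v_i$, exactly as in Algorithm \ref{algo:constructP}. By Observation \ref{obs:lastclique} we always have $Q_r^i\in C(N[v_i])$, so if $|C(N[v_i])|=1$ then necessarily $C(N[v_i])=\{Q_r^i\}$, i.e. $N[v_i]\subseteq Q_r^i$ (a clique is always contained in a single maximal clique of $\mathcal{O}$, and that clique is pinned to $Q_r^i$ by the observation). Hence $|C(N[v_i])|\geq 2$ is equivalent to $N[v_i]\not\subseteq Q_r^i$, and it suffices to exhibit, for every $i<p$, one neighbour of $v_i$ that does not belong to $Q_r^i$ (equivalently, two non-adjacent neighbours of $v_i$).

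For an internal vertex $v_i$ with $2\leq i\leq p-1$ I would use that $P$ is an induced path, so $v_{i-1}$ and $v_{i+1}$ are neighbours of $v_i$ that are non-adjacent to each other; thus $N[v_i]$ is not a clique and cannot sit inside $Q_r^i$. A robust way to phrase this is to note that $v_{i+1}$ is selected from $Q_r^i\setminus Q_{r'}^i$, so $v_{i+1}\in Q_r^i$; if $v_{i-1}$ also lay in $Q_r^i$, then $v_{i-1}$ and $v_{i+1}$ would share the clique $Q_r^i$ and hence be adjacent, contradicting induced-ness. Therefore $v_{i-1}\notin Q_r^i$, and $v_{i-1}$ is the required witness. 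As a by-product this rules out the degenerate possibility that $v_i$ is confined to the single clique $Q_r^i$, since then both path-neighbours would fall inside $Q_r^i$.

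The boundary vertex $v_1$ is the case that needs a separate argument, and I expect it to be the main obstacle, precisely because there is no predecessor to serve as the early-ending neighbour. Here I would argue directly from the selection rule: $v_1$ is the interval of $Q_1$ with the largest right endpoint, and $v_1\neq v_p$ forces $Q_r^1\neq Q_t$. Assuming $G$ connected (as is implicit in the single-path construction, consecutive maximal cliques of $\mathcal{O}$ share a vertex), $v_1$ cannot be contained in $Q_1$ alone, so $v_1$ spans at least two cliques and $Q_r^1\neq Q_1$. Maximality of $Q_1$ together with $Q_1\neq Q_2$ then yields a vertex $x\in Q_1\setminus Q_2$; since the cliques containing $x$ occur consecutively, $x$ is contained only in $Q_1$, so $x\notin Q_r^1$, while $x$ is a neighbour of $v_1$ because both lie in $Q_1$. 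This gives $N[v_1]\not\subseteq Q_r^1$ and hence $|C(N[v_1])|\geq 2$.

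Two points will need care. First, the internal-vertex argument relies on $P$ being an induced path; if that property is only established later in the section, I would instead run the purely construction-based version, exhibiting $v_{i-1}$ (which ends at $Q_r^{i-1}$) as a neighbour of $v_i$ not contained in $Q_r^i$, but this then requires the progress fact $Q_r^{i-1}<Q_r^i$, which itself follows from $v_i$ being chosen with the largest right endpoint in $Q_r^{i-1}\setminus Q_{r'}^{i-1}$ while $Q_r^{i-1}\neq Q_t$. Second, I would state the reduction step explicitly, namely that a one-clique cover of $N[v_i]$ must be $\{Q_r^i\}$ by Observation \ref{obs:lastclique}, so that the case $|C(N[v_i])|=1$ is disposed of without ambiguity.
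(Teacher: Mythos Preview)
Your argument is correct and essentially mirror-images the paper's. The paper anchors the cover at the \emph{left} end of $v_i$'s range: it writes $C(N[v_i])=\{Q_r^{i-1}\}\cup C(Q_{r+1}^{i-1},\ldots,Q_r^{i})$ and then uses the successor $v_{i+1}$ (which exists precisely because $i<p$ and is selected outside $Q_r^{i-1}$) as the witness that the second set is non-empty, so at least two cliques are required. You instead anchor at the \emph{right} end via Observation~\ref{obs:lastclique}, pin the hypothetical single-clique cover to $\{Q_r^i\}$, and exhibit a neighbour of $v_i$ lying strictly to the left of $Q_r^i$: for $i\geq 2$ the predecessor $v_{i-1}$ does the job, while $v_1$ needs the separate $Q_1\setminus Q_2$ argument. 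The paper's route is more uniform in that the single witness $v_{i+1}$ is available for every $i<p$ (no boundary case), though its write-up glosses over how the decomposition is even well-posed at $i=1$; your route reduces the claim to the transparent statement that $N[v_i]$ is not contained in one maximal clique, at the price of a case split and the mild side-conditions (connectedness of $G$, induced-ness of $P$, and the progress fact $Q_r^{i-1}<Q_r^i$) that you already flag.
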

 \begin{proof}
 By construction of path $P$,
 \[ CC(N[v_1,\dots,v_i]) = CC(N[v_1,\dots,v_{i-1}]) \cup C(Q_{r+1}^{i-1},\dots, Q_{r}^i) \]
$Q_r^{i-1}$ is the rightmost maximal clique in $CC(N[v_1,\dots,v_{i-1}])$ and it covers $v_i$, since $v_i \in Q_r^{i-1}$.  $Q_{r}^i$ is the rightmost maximal clique which $v_i$ belongs to, in the ordering $\mathcal{O}$. Note that $C(N[v_i])= Q_r^{i-1}\cup C(Q_{r+1}^{i-1},\dots, Q_{r}^i)$. Let $A = N[v_i]\cap (Q_{r+1}^{i-1} \cup\dots\cup Q_{r}^i)$. Since $v_i\neq v_p$, we know that there is $v_{i+1}\in P$ which is chosen such that $v_{i+1}\notin Q_r^{i-1}$ and $v_{i+1}\in N[v_i]$. It follows that $A\neq  \emptyset$. By choice of $v_i$, it has the rightmost right endpoint among all vertices in $ Q_r^{i-1}\setminus Q_{r'}^{i-1}$ and $v_i\neq v_p$. Hence $\exists u\in A $ which is not covered by $Q_r^{i-1}$. Therefore, there exists at least one $Q \in C(Q_{r+1}^{i-1},\dots, Q_{r}^i)$ that covers all the vertices in $A$,. In other words, $C(Q_{r+1}^{i-1},\dots, Q_{r}^i)\neq \emptyset$. It follows that $C(N[v_i])$ is of size at least 2. 
 \end{proof}
 
\noindent From Lemma \ref{lem:forbtwofive}  and Observation \ref{obs:ccatleast2},  we present the following claim.  
\begin{lemma}\label{claim:forbgeneral}
 In path $P$, there is at most one vertex $v$ where $\mid C(N[v])\mid~= 3$.
\end{lemma}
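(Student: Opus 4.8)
The plan is to argue by contradiction: suppose $P$ contains two distinct vertices $v_a$ and $v_b$ with $a<b$ and $|C(N[v_a])|=|C(N[v_b])|=3$. By Lemma \ref{lem:forbtwofive} no two consecutive vertices of $P$ can both have clique cover size $3$, so $b\geq a+2$, and the contiguous segment $P'=v_a,v_{a+1},\ldots,v_b$ is an induced subpath of $P$ on $k=b-a+1\geq 3$ vertices. I will produce, in the open neighbourhood $N(P')$, an independent set of size at least $k+3$; together with $P'$ this exhibits a graph of $\mathcal{F}$ (Definition \ref{def:forbidPattern}) as an induced subgraph of $G$, contradicting the hypothesis that $G$ is forbidden-structure-free.

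First I would build an explicit clique cover of $N[P']$ and bound its size. Using the decomposition $C(N[v_i])=Q_r^{i-1}\cup C(Q_{r+1}^{i-1},\ldots,Q_r^i)$ established in the proof of Observation \ref{obs:ccatleast2}, set
\[
CC' = C(N[v_a])\ \cup\ \bigcup_{i=a+1}^{b} C(Q_{r+1}^{i-1},\ldots,Q_r^i),
\]
which is a clique cover of $N[v_a]\cup\cdots\cup N[v_b]=N[P']$. The first term contributes $3$ cliques; the increment for $i=b$ contributes $|C(N[v_b])|-1=2$ cliques; and each intermediate increment ($a<i<b$) contributes $|C(N[v_i])|-1\geq 1$ clique, since every such $v_i$ satisfies $v_i\neq v_p$ and hence $|C(N[v_i])|\geq 2$ by Observation \ref{obs:ccatleast2}. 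Summing, $|CC'|\geq 3+2+(b-a-1)=k+3$.

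The decisive step is to show that this count is witnessed by an independent set, that is, that the maximum independent set of $G[N[P']]$ has size at least $k+3$; note that the mere existence of the cover $CC'$ only bounds this quantity from above, so a genuine lower bound is needed. Since interval graphs are perfect, it suffices to prove that $CC'$ is a \emph{minimum} clique cover of $N[P']$, for then the clique cover number equals the independence number and the required independent set exists; equivalently, one may construct such an independent set directly, range by range. For the minimality route I would show, via Observation \ref{obs:minCCindSet}, that every clique in $CC'$ carries a private vertex: each clique of the minimum cover $C(N[v_a])$ has a vertex unique to it within $N[v_a]$, and each clique of every minimum increment-cover $C(Q_{r+1}^{i-1},\ldots,Q_r^i)$ has a vertex unique to it within that increment. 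The crux is that the clique-index ranges covered by the successive increments are consecutive and pairwise disjoint (the $i$-th increment covers the cliques $Q_{r+1}^{i-1},\ldots,Q_r^i$, which begin immediately after $v_{i-1}$ ends), so these private vertices can be kept private across all of $CC'$; consequently any clique cover of $N[P']$ needs at least $|CC'|\geq k+3$ cliques and $CC'$ is minimum. I expect this separation argument, namely ensuring that the witnessing vertices chosen in one range are not absorbed by cliques of a neighbouring range, to be the main obstacle, and it is exactly here that the interval (consecutive-clique) structure of $\mathcal{O}$ has to be used carefully.

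Finally I would verify that the resulting independent set $V'$ of size at least $k+3$ lies in the open neighbourhood $N(P')$ and is disjoint from $P'$. This is automatic from the way the witnesses are obtained: each witness sits in some $N(v_i)$, and since $v_i$ is adjacent to every vertex of $N(v_i)$, no path vertex can belong to an independent set of size at least two drawn from these neighbourhoods; hence $V'\subseteq N(P')\setminus P'$. Thus $P'$ (an induced path on $k$ vertices) together with the independent set $V'$ (of size at least $k+3$ in $N(P')$) is a member of $\mathcal{F}$, contradicting the assumption that $G$ contains no forbidden structure. Therefore at most one vertex $v$ of $P$ has $|C(N[v])|=3$.
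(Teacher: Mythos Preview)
Your proposal follows essentially the same route as the paper's proof: assume two path vertices with $|C(N[\cdot])|=3$, invoke Lemma~\ref{lem:forbtwofive} to separate them, use Observation~\ref{obs:ccatleast2} on the intermediate vertices, tally the resulting clique cover of $N[P']$ to at least $k+3$, and exhibit a member of $\mathcal{F}$. The paper's version differs only cosmetically: it takes $v_i$ to be the \emph{first} such vertex and $v_j$ the next one, so every intermediate vertex has $|C(N[\cdot])|$ equal to exactly~$2$ and the count comes out to exactly $l+3$; it then asserts the existence of the independent set without the minimality discussion you (correctly) flag as the crux.

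One small repair is needed in your last paragraph. The sentence ``since $v_i$ is adjacent to every vertex of $N(v_i)$, no path vertex can belong to an independent set of size at least two drawn from these neighbourhoods'' does not establish $V'\cap P'=\emptyset$: non-adjacent path vertices (e.g.\ $v_a$ and $v_{a+2}$) can perfectly well sit together in an independent subset of $N[P']$. The correct reason is structural: every path vertex $v_j$ with $a\le j\le b$ lies in at least two distinct cliques of $CC'$ (for $j>a$ it lies in both $Q_r^{\,j-1}$ and $Q_r^{\,j}$, and $v_a$ lies in all three cliques of $C(N[v_a])$), so no path vertex can serve as a \emph{private} witness for any clique of $CC'$. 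With this fix your argument goes through and matches the paper's.
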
  
\begin{proof} 
The proof of this claim is again by contradiction. Assume that there is more than one vertex with size of minimum clique cover equal to 3 in path $P$. Let $v_i$ be the first such vertex in $P$ in the increasing order of left endpoints. By Lemma \ref{lem:forbtwofive}, we know that the minimum clique cover of $v_{i+1}$ is of size less than 3.  By our assumption, $\exists j > i+1$ such that minimum clique cover of $v_j$ is of size 3. Let the number of vertices in the subpath of $P$ from $v_i$ to $v_j$ (including both $v_i$ and  $v_j$)   be $l$.  It follows from Observation \ref{obs:ccatleast2} that for each vertex $v_k\in P,~ k\in [i+1,j-1]$, the minimum clique cover is of size 2.  We compute $CC(N[v_i, \ldots, v_k])$ with respect to $CC(N[v_i,\dots, v_{k-1}])$. Note that $v_k$ is already covered in $CC(N[v_i,\dots, v_{k-1}])$ and  $CC(N[v_i, \ldots, v_k])$ additionally covers $N[v_k]\setminus N[v_{k-1}]$. Since $|C(N[v_k])|$ = 2, it adds just 1 to the number of cliques in $CC(N[v_i,\dots,$ $ v_{k-1}])$. That is, 
%$$ minimum ~ clique ~cover (v_i,\dots, v_k) = minimum ~ clique ~cover (v_i,\dots, v_{k-1}) + 1$$
\[\mid CC(N[v_i, \ldots, v_k])\mid~=~\mid CC(N[v_i,\dots, v_{k-1}])\mid +~1 \]
It follows that each of the $l-2$ vertices in $\{v_{i+1}, \ldots, v_{j-1} \}$ increments the size of the clique cover by 1. That is, $\mid CC(N[v_{i+1},\dots , v_{j-1}])\mid = l-2$. Thus 
\begin{align*}
\mid CC(N[v_i, \ldots, v_j])\mid~ &= ~\mid C(N[v_i])\mid +  \mid CC(N[v_{i+1},\dots, v_{j-1}])\mid +~(\mid C(N[v_j])\mid - 1)\\
&=~ 3 + (l-2) + 3 -1\\
&=~ l+3
\end{align*}

%minimum ~ clique ~cover (N(v_i,\dots, v_j)) &= min ~ clique ~cover (N(v_i))+ min ~ clique 	      ~cover(N(v_{i+1},\dots, v_{j-1}))\\           &~~+ min ~ clique ~cover N(v_j) -1 \\
 \begin{figure}[t]
\centering
\begin{tikzpicture}[scale=0.7, every node/.style={scale=0.8}]
\draw(3,2)--(6,2);
\draw(12,0.7)--(16,0.7);
\draw(5.3,1.5)--(7.5,1.5);
\draw(6.8,1.3)--(9,1.3);

\draw (3,1) ellipse (0.5cm and 1.75cm);
\draw (4.5,1) ellipse (0.5cm and 1.75cm);
\draw (6,1) ellipse (0.5cm and 1.75cm);
\draw (7.5,1) ellipse (0.5cm and 1.75cm);
\draw (9.2,1) ellipse (0.5cm and 1.75cm);
\draw[very thick,dotted] (10,1)--(11,1);

\draw (12,1) ellipse (0.5cm and 1.75cm);
\draw (14,1) ellipse (0.5cm and 1.75cm);
\draw (16,1) ellipse (0.5cm and 1.75cm);

%\draw(3,-1) node{$K_{i}$};
%\draw(4.5,-1) node{$K_{i+1}$};

%\draw(6,-1) node{$K_{i+2}$};
%\draw(7.5,-1) node{$K_{i+3}$};
%\draw(9,-1) node{$K_{i+4}$};
%\draw(10,-1) node{$K_{j-1}$};
%\draw(12,-1) node{$K_{j}$};
%\draw(14,-1) node{$K_{j+1}$};
%\draw(16,-1) node{$K_{j+2}$};
\draw(5.2,2.2) node{$v_i$};
\draw(13,1) node{$v_j$};
\draw(6.8,1.8) node{$v_{i+1}$};
\draw(8.3,1.6) node{$v_{i+2}$};
\draw [decorate,decoration={brace,amplitude=5pt,mirror,raise=4ex}]
  (7.5,-0.5) -- (12,-0.5) node[midway,yshift=-3em]{$l-2$};

%\draw(5,-1) node{$z_m$};
%\draw(7,-1) node{$z_t$};
%\draw(8,-0.75) node{$r'$};
%\draw(9,-0.75) node{$r'+1$};
\end{tikzpicture}
\label{fig:ProofClaimfig}
 \caption{Vertices $v_i$ and $v_j$ belong to three consecutive cliques in the clique cover}
\end{figure}
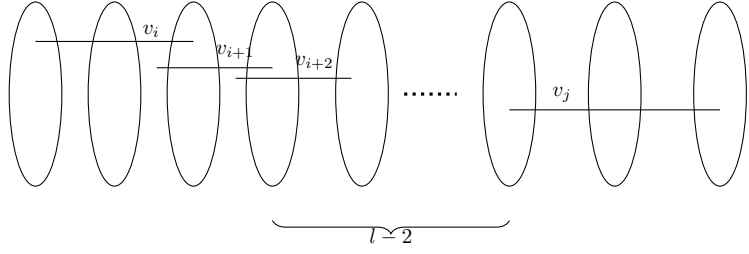
\noindent Note that we deduct 1 from $\mid C(N[v_j])\mid$ since $v_j$ is already covered by $CC(N[v_{i+1}],\ldots,N[v_{j-1}])$. We can see that the vertices from $v_i$ to $v_j$ form a path of length $l$ which has an independent set of size $l+3$ in its neighbourhood. The vertices from  $v_i$ to $v_j$, together with the independent set of size $l+3$ in its neighbourhood forms a forbidden structure. We have arrived at a contradiction to our premise that $G$ does not contain any forbidden structures. Therefore it is proved that in path $P$, there is at most one vertex which has a minimum clique cover of size $3$.
 \end{proof}

\subsection{Computing the exact hitting set from $\mathcal{K}$} \label{subsec:computeExactHittingSet}

We now complete the characterization of \textsf{EHIG}s.  
%To complete the characterization, 
We first prove the characterization when $p$ is at most 3 and then complete the proof by an inductive argument.  From Algorithm \ref{algo:constructP},  we use the minimal clique cover, $\mathcal{K} = \{K_1, K_2 \dots K_{\alpha'}\}$, which consists of maximal cliques in $\mathcal{O}$  and the path $P$ consisting of vertices $v_1, \ldots, v_p$ from left to right, in the arguments below. Further, for each $1 \leq i \leq \alpha'$, we use $D_i$ to denote the gadget corresponding to $K_i$.  For each $1 \leq i \leq p$, let $L_i$ and $R_i$ denote the leftmost and rightmost, respectively, maximal cliques in $\mathcal{O}$ which contains $v_i$.   $DL_i$ and $DR_i$ denote the gadgets in $H_G$ corresponding to $L_i$ and $R_i$, respectively. Recall that $H_G$ denotes the canonical interval representation of $G$. It is useful to note that $DL_1$ and $D_1$ both denote the gadget associated with $K_1$, and $DR_p$ and $D_{\alpha'}$ both denote the gadget associated with $K_{\alpha'}$.  The inductive argument below refers to two interval graphs $G$ and $G'$, and the gadgets we refer to are in $H_G$ or $H_{G'}$, and will be clear from the context.  In all our arguments that follow, we use the maximal clique ordering $\mathcal{O}$ to reason about the position of a maximal clique with respect to another maximal clique.  We thus use the words and phrases {\em before, after, at or before, at or after} with respect to $\mathcal{O}$.  These relationships are also represented by $<, >, \leq, \geq$, whenever it is more convenient to use these symbols.  

We now prove the characterization based on the different cases for $p$. For $p=1$, we explicitly present the hitting set, and when $2 \leq p \leq 3$, we   present the exact hitting sets which satisfy some additional properties. These additional properties are useful in the inductive argument for $p \geq 3$.  In all the following lemmas in this section, we assume that $G$ is an interval graph which does not have an $F \in {\cal F}$ as an induced subgraph.  Further, all the statements are based on the value of $p$ which is the number of vertices in the path $P$ computed by Algorithm \ref{algo:constructP}, and ${\cal K}$.  
\begin{lemma}
    \label{lem:basecase1}
Let $v_1$ be the only vertex in $P$.   Then the canonical interval representation $H_G$ satisfies one of the following two statements.
 \begin{itemize}
     \item Let $N[v_1]$ have a minimum clique cover of size 2, which is $\{K_1, K_2\}$. Then $\{z_1, z_2+1\}$ and $\{z_1-1,z_2\}$ are two exact hitting sets of $H_G$.
     \item Let $N[v_1]$ have a minimum clique cover of size 3, which is $\{K_1, K_2, K_3\}$.  Let $w_1 = |K_1 \cap K_2|$  and let $w_3 = |K_2 \cap K_3|$.  Then $\{z_1 - w_1, z_2, z_3 + w_3\}$ is an exact hitting set of $H_G$.
 \end{itemize}
\end{lemma}
\begin{proof}
In the first case, $K_1$ and $K_2$ are the first and last cliques of $\mathcal{O}$, respectively.  Let $D_1$ and $D_2$ be the gadgets corresponding to $K_1$ and $K_2$, respectively, in $G$.  By the definition of $H_G$, the interval associated with $v_1$ has $z_1$ as the left endpoint and $z_2$ as the right endpoint. Consider the set $\{z_1, z_2 + 1\}$. $z_1$ hits all the intervals whose left endpoint is in $D_1$, and $z_2+1$ hits all the intervals whose left endpoint is to the right of $D_1$, and right endpoint is to the right of $z_2$ in $D_2$.  Further, each interval is hit, and no interval is hit twice. By a symmetric argument, $\{z_1-1,z_2\}$ is also an exact hitting set.

 In the second case, consider the cliques $B_1 = K_1 \setminus K_2$, $B_3 = K_3 \setminus K_2$, and $B_2 = K_2$.  Let $w_1 = |K_1 \cap K_2|$ and $w_3 = |K_3 \cap K_2|$. We show that the set $\{z_1 - w_1, z_2, z_3 + w_3\}$ is an exact hitting set of $H_G$.  First, we observe that the point $z_2$ hits all the intervals in $D_2$. Further, from the construction of $H_G$ it follows that, if $I_1$ is an interval in $D_1 \setminus D_2$ and $I_2$ is an interval in $D_1 \cap D_2$, then in $D_1$ the left endpoint of $I_1$ is smaller than the left endpoint of $I_2$.   Thus, among the $w_1$ intervals in $D_1 \cap D_2$, the longest interval has the left endpoint in $z_1$ and the remaining $w_1 - 1$ intervals start at different points in $\{(z_1-1),\ldots,(z_1-(w_1-1))\}$. Therefore, the point $z_1 - w_1$  does not hit any of the $w_1$ intervals that belong to  $D_1 \cap D_2$. Further, it hits all the intervals in $D_1 \setminus D_2$.  A symmetric argument shows that $z_3 + w_3$ hits all the intervals in $D_3 \setminus D_2$ and does not hit any interval in $D_3 \cap D_2$.  The remaining intervals are all in $D_2$, and they are not hit by either $z_1 - w_1$ or $z_3 + w_3$, and they all contain $z_2$ which is the zero-point of gadget $D_2$.  
 Thus, this base case is proved.  
\end{proof}
\begin{lemma}
    \label{lem:basecase2}
    Let $P$ consist of only $v_1$ and $v_2$. Then, the following statements hold for $H_G$.
    \begin{itemize}
        \item Let $N[v_1]$ and $N[v_2]$ have minimum clique cover of size 2. In ${\cal K} = \{K_1,K_2,K_3\}$, let $w_1 = |K_1 \cap L_2|$ and let $w_2 = |K_{3} \cap K_{2}|$.  Then $\{z_1 - w_1, z_2, z_3 + w_3\}$ is an exact hitting set of $H_G$.
        \item Let $N[v_1]$ have minimum clique cover of size 2 and $N[v_2]$ have minimum clique cover of size 3. In ${\cal K} = \{K_1,K_2, K_3, K_4\}$, let $w_1 = |K_1 \cap L_2|$ and $w_4 = |K_4 \cap K_3|$. Then $H_G$ has an exact hitting set which contains $\{z_1-w_1, z_4+w_4\}$.
        \item Let $N[v_1]$ have minimum clique cover of size 3 and $N[v_2]$ have minimum clique cover of size 2. In ${\cal K} = \{K_1,K_2, K_3, K_4\}$, let $w_1 = |K_1 \cap K_2|$ and $w_4 = |K_4 \cap K_3|$. Then $H_G$ has an exact hitting set which contains $\{z_1-w_1, z_4+w_4\}$.
    \end{itemize}
\end{lemma}
\begin{proof}
If $p=2$, then ${\cal K}$ either has 3 maximal cliques or 4 maximal cliques. Thus $\alpha'$ which is the index of the last clique in ${\cal K}$ is in $\{3,4\}$.  Further, in all the cases, $z_1 - w_1$ is in $D_1$ and for $\alpha' \in \{3,4\}$, $z_{\alpha'} + w_{\alpha'}$ is in $D_{\alpha'}$.   Consider the point $z_1 - w_1$ which is in $D_1$.  In the first two cases, $z_1 - w_1$ hits the intervals associated with vertices in $K_1 \setminus L_2$ and does not hit any interval associated with vertices in  $K_1 \cap L_2$.  Consider the interval graph obtained by removing $K_1 \setminus L_2$.  Since $N[v_1]$ has a clique cover of size 2, it follows that $L_2 \setminus K_1 \subset K_2$.   
Consider the maximal cliques from $K_2$ to $K_3$.  $v_2$ is common to all of these cliques.  We now consider the first case and the second case separately.

In the first case,  the minimum clique cover of $N[v_2]$ is of size 2.   Let $w_3$ be the number of vertices in $K_2 \cap K_3$.  Consider the exact hitting set $\{z_2, z_3+w_3\}$. Clearly, $z_2$ hits all the intervals in $D_2$, and $z_3 + w_3$ hits all the intervals in $D_3 \setminus D_2$ and does not hit anything in $D_2$; this is because the right endpoints of the intervals in $D_2 \cap D_3$ are smaller than $z_3 + w_3$.  Thus, for the first case $\{z_1-w_1, z_2, z_3+w_3\}$ is an exact hitting set of $H_G$. 

In the second case, the minimum clique cover of $N[v_2]$ is of size 3.  Since the first clique containing $v_2$ is $L_2$, the minimum clique cover of $N[v_2]$ is $L_2, K_3, K_4$. Consider $G'$ to be the interval graph induced by $N[v_2]$ and consider $H_{G'}$.  $L_2$ is the first maximal clique in $G'$ and in this case let $z$ denote the zero-point of the gadget of $DL_2$ in $H_G$.
By using the second case of Lemma \ref{lem:basecase1}, consider the exact hitting set $\{z - |L_2 \cap K_3|, z_3, z_4 + |K_3 \cap K_4|\}$ of $H_{G'}$.    From this set, we construct an exact hitting set of $H_G$ based on the following two subcases.  The first subcase is that there are vertices in $L_2 \setminus K_1$ whose leftmost clique is $L_2$, and whose rightmost clique  is before $K_3$.  Then, the interval in $H_{G'}$ of such a vertex is hit by $z - |L_2 \cap K_3|$ and this also hits all the intervals associated with vertices in $L_2 \setminus K_3$.  Thus, in this subcase, we get an exact hitting set for $H_G$ consisting of $\{z_1-w_1, z - |L_2 \cap K_3|, z_3, z_4 + |K_3 \cap K_4|\}$.  In the second subcase, for all the vertices whose leftmost clique is $L_2$, the rightmost clique is at or after $K_3$.  Let $h$ be the intermediate point just preceding $DL_2$, which is the gadget associated with $L_2$ in $H_G$.  
Consider the set $\{z_1 - w_1, h, z_3, z_4 + |K_3 \cap K_4|\}$.  $z_1 - w_1$ hits all intervals associated with vertices in $K_1 \setminus L_2$, $h$ hits all intervals associated with vertices in $L_2 \setminus (L_2 \cap K_3)$, and this set includes $K_1 \cap L_2$.  $z_3$ hits all intervals associated with vertices in $(L_2 \cap K_3) \cup (K_3 \cap K_4)$, and $z_4 + |K_3 \cap K_4|$ hits all intervals associated with vertices in $K_4 \setminus K_3$ (that is, the intervals in $D_4 \setminus D_3$).   Further, it is clear that it is an exact hitting set. Thus, the first two cases are proved.  
The third case is symmetric to the second case\footnote{A detailed description of the symmetry is presented in the inductive argument in Lemma \ref{lem:charcomplete}}, and thus it is also true. Hence the lemma. 
\end{proof}

\noindent
{\em Remark:} The approach of obtaining an exact hitting set for $H_G$ from an exact hitting set for $H_{G'}$, by using a preceding or following intermediate point, is a template that repeats in the following proofs. However, this template is used in different contexts, and it seems that the repeated case-by-case usage of this template is unavoidable.
\begin{lemma}
    \label{lem:basecase3}
    Let $P$ consist of only $v_1, v_2, v_3$, and let $N[v_2]$ have a minimum clique cover of size 3.  Further, let $w_1 = |K_1 \cap L_2|$ and let $w_{\alpha'} = |K_{\alpha'} \cap K_{\alpha'-1}|$.  Then $H_G$ has an exact hitting set containing $\{z_1 - w_1, z_{\alpha'} + w_{\alpha'}\}$.
\end{lemma}
\begin{proof}
    Consider the interval graph $G'$ obtained by removing $K_1 \setminus L_2$.  Let $H_{G'}$ be the canonical  interval representation.
    The path obtained from Algorithm \ref{algo:constructP} on $G'$, using the cliques from $L_2$ to $K_{\alpha'}$ is the path consisting of $v_2$ and $v_3$ only.  The first maximal clique in $H_{G'}$ is $L_2$ which is the leftmost clique containing $v_2$. Further, the rightmost clique containing $v_2$ is $K_{\alpha'-1}$, and the minimum clique cover of $N[v_2]$ is 3 and the minimum clique cover of $N[v_3]$ is 2 (from Lemma \ref{claim:forbgeneral}).  Let $S$ be an exact hitting set of $H_{G'}$ obtained from using Lemma \ref{lem:basecase2}.  Let $z$ denote the zero-point of the gadget $DL_2$.  For some $w > 0$, let $z-w$ be in $S$.  Also,  in the construction of $S$ it is ensured that the interval associated with $v_2$ is not hit by $z-w$.  From the structure of $S$ defined in Lemma \ref{lem:basecase2}, $\{z-w, z_{\alpha'-1}, z_{\alpha'} + w_{\alpha'}\} \subseteq S$.
 We get an exact hitting set for $H_G$ from $S$ based on two cases.
    
    The first case is that there are vertices in $L_2 \setminus K_1$ for which the leftmost clique is $L_2$ and the interval associated with them are hit by $z-w$ in $H_{G'}$.   Such vertices will also occur in the rightmost clique containing $v_1$, since $N[v_1]$ has a clique cover of size 2 in $G$.  Thus in this case, $S \cup \{z_1 - w_1\}$ is an exact hitting set for $H_G$.    The main  reason  is that the intervals associated with the vertices in $L_2 \setminus K_1$ whose leftmost clique is at  or before $L_2$ will all be hit by $z-w$. Further, $z_1 - w_1$ hits all the intervals associated with vertices in $K_1 \setminus L_2$.
    %In $H_G$ consider $\{z_1-w_1, z - w, z_{\alpha'-1}, z_{\alpha'} + w_{\alpha'}\}$. 
    %and the rightmost clique  is before $K_{\alpha'-1}$.  Then, the interval in $H_{G'}$ of such a vertex is hit by $z_2 - |L_2 \cap K_{\alpha'-1}|$ \textcolor{red}{the same doubt on $z_2$ occurs here as well} and this also hits all the intervals associated with vertices in $L_2 \setminus K_{\alpha'-1}$.  
    %Since in $G$, $N[v_1]$ has a clique cover of size 2, such vertices will also occur in the righmost clique containing $v_1$. 
    % \textcolor{red}{$z - w ?? $}.  
    
    In the second case, all the intervals in $L_2 \setminus K_1$ whose leftmost clique is $L_2$ are not hit by $z-w$; they are hit by another element of $S$.  Thus, $z-w$ is in the hitting set of $H_{G'}$ only to hit intervals whose left endpoint is to the left of $DL_2$ in $H_G$. Let $h$ be the intermediate point just to preceding the gadget associated with $DL_2$ in $H_G$.  
    %for all the vertices whose leftmost clique is $L_2$, the rightmost clique is at or after $K_{\alpha'-1}$.  Let $h$ be the intermediate point just preceding the gadget associated with $L_2$ in $H_G$.  
Consider the set $S \setminus \{z-w\} \cup \{z_1 - w_1, h\}$.  This is an exact hitting set of $H_G$ since $S$ is an exact hitting set of $H_{G'}$, the intervals associated with the vertices in $L_2 \setminus K_1$ whose leftmost clique is before $L_2$ will all be hit by $h$, and $z_1 - w_1$ hits all the intervals associated with vertices in $K_1 \setminus L_2$. Hence the lemma.
%From structure of the hitting set defined in Lemma \ref{lem:basecase2}, this set  contains $\{z_1 - w_1, h, z_{\alpha'-1}, z_{\alpha'} + w_{\alpha'}\}$.  $z_1 - w_1$ hits all intervals associated with vertices in $K_1 \setminus L_1$, $h$ hits all intervals associated with vertices in $L_2 \setminus (L_2 \cap K_{\alpha'-1})$, and this set includes $K_1 \cap L_2$.  $z_{\alpha'-1}$ hits all intervals associated with vertices in $(L_2 \cap K_{\alpha'-1}) \cup (K_{\alpha'-1} \cap K_{\alpha'})$, and $z_{\alpha'} + w_{\alpha'}$  hits all intervals in $D_{\alpha'} \setminus D_{\alpha'-1}$.  Further, it is clear that it is an exact hitting set.  Hence the lemma.
 \end{proof}
 
\begin{lemma}
\label{lem:charcomplete}
Let $G$ be an  interval graph  which  does not have an $F \in {\cal F}$ as an induced subgraph.  Then the canonical interval representation $H_{G}$ has an exact hitting set. Further, if $p \geq 2$, there is an exact hitting satisfying the following two additional properties: 
\begin{itemize}
    \item The intervals corresponding to the vertices in $K_1 \setminus N[v_2]$ are hit by a point in $D_1$ to the left of $z_1$ and $D_1 \cap N[v_2]$ is hit by a point in $DL_2$ which is to the left of the zero-point of $DL_2$ or the intermediate point just before $DL_2$.
    \item The intervals corresponding to vertices  in $K_{\alpha'} \setminus N[v_{p-1}]$ are hit by a point in $D_{\alpha'}$ to the right of $z_{\alpha'}$ and $D_{\alpha'} \cap N[v_{p-1}]$ is hit by a point in $D_{\alpha'-1}$  to the right of the zero-point of $D_{\alpha'-1}$ or the intermediate point just after  $D_{\alpha'-1}$.
\end{itemize}
%Let $G$ be an  interval graph  such that $G$  does not have an $F \in {\cal F}$ as an induced subgraph. Then the canonical interval representation $H_{G}$ has an exact hitting set such that the intervals in $D_1 \setminus D_2$ are hit by a point which occurs to the left of $z_1$ in $D_1$, and the intervals in $D_{\alpha'} \setminus D_{\alpha'-1}$ are hit by a point which occurs to the right of $z_{\alpha'}$ in $D_{\alpha'}$.
\end{lemma}
\begin{proof}
The proof when $p=1$ follows from Lemma \ref{lem:basecase1}.
The proof is by induction on $p \geq 2$.  The base case is for $p=2$, and from Lemma \ref{lem:basecase2} we know $H_G$ has a hitting set which satisfies the additional properties.  
We now prove the claim for all $p \geq 3$. 

First, we consider the case in which  minimum clique cover of  either $N[v_1]$ or $N[v_2]$ is of size 3 and  the minimum clique cover of either $N[v_{p-1}]$ or $N[v_p]$ is of size 3.  From Lemma \ref{claim:forbgeneral}, we know that there is at most one vertex in $P$ which has a minimum clique cover of size 3.
Therefore, in this case, it follows that $p=3$, and $N[v_2]$ has a minimum clique cover of size 3.  From Lemma \ref{lem:basecase3}, we know that $H_G$ has an exact hitting set which satisfies the additional properties.  

Next we consider the case in which  either the minimum clique cover of  $N[v_1]$ and $N[v_2]$ are both of size 2 and if not, the minimum clique cover of $N[v_{p-1}]$ and $N[v_p]$ are both of size 2.  The induction hypothesis is that the claim is true for all natural numbers  in the set $\{2, \ldots, p-1\}$.  Given this, we prove that the claim is true for $p$.

Let us first consider the case when $N[v_1]$ and $N[v_2]$ both have  a minimum clique cover of size 2.  By construction, in Algorithm \ref{algo:constructP}, we know that $K_2$ is the rightmost clique in $\mathcal{O}$ which contains $v_1$.  From Observation \ref{obs:lastclique}, we know that $K_2$ is in a minimum clique cover of $N[v_1]$. Any vertex in $N[v_1]$ which is not present in $K_2$ will be present in $K_1$, since minimum clique cover of $N[v_1]$ is 2.  Let $L_2$ be the leftmost clique in $\mathcal{O}$ which contains $v_2$.  We know that $K_1 < L_2 \leq K_2$. That is, $L_2$ is a clique to the right of $K_1$ and is not later than $K_2$ in $\mathcal{O}$.
 
Consider the partition of $K_1$ into $K_1 \setminus L_2$ and $K_1 \cap L_2$.  Let $w_1$ be the number of vertices in $K_1 \cap L_2$.  Let $B_1 = K_1 \setminus L_2 = K_1 \setminus N[v_2]$.  Consider the point $z_1 - w_1$ in the gadget $D_1$  in $H_G$.   By the nature of the construction of $H_G$, this point is to the left of the starting point of the intervals associated with vertices in $K_1 \cap L_2$.  Consider the interval graph $G'$ obtained by removing $B_1$ from $G$.  Since the elements in $N[v_1]$ are in the cliques $K_1$ and $K_2$, the first clique in the maximal clique ordering of $G'$ is $L_2$.  Further, among all the vertices in $L_2$, $v_2$ is the vertex in $L_2$ for which the rightmost clique containing it has the largest index in $\mathcal{O}$. Thus, Algorithm \ref{algo:constructP} on $G'$ will compute the path $P \setminus v_1$ which has the $p-1$ vertices $v_2, v_3, \ldots v_p$.  Let $L_3$ denote the leftmost maximal clique in $G'$ which contains $v_3$.   
The additional properties satisfied by $G'$ are as follows:
\begin{itemize}
    \item For each vertex $v$ in $G'$ for which the leftmost clique containing it (in $\mathcal{O}$) is after $K_1$ and before $L_2$, the rightmost clique containing $v$ is before $K_3$.  Otherwise, the choice for $v_2$ by the algorithm is violated.
    \item For each vertex $v$ in $K_1 \cap L_2$, the rightmost clique (in $\mathcal{O}$) containing $v$ is at or before $K_2$.
    \item For each vertex $u$ in $N[v_2]$ for which the leftmost clique containing it is to the right of $L_2$, $u$ is also present in $K_3$. Otherwise, let 
    there exist such a vertex $u$ for which the rightmost clique containing it is before $K_3$.
    Then $u$ along with one vertex each in $L_2$ and $K_3$ form an independent set of size 3.  Thus,  minimum clique cover of $N[v_2]$ is 3. This contradicts our premise that for both $N[v_1]$ and $N[v_2]$ have a minimum clique cover of size 2.
%\item $L_3$ is  after of $K_2$ and not later in $K_3$ in $\mathcal{O}$. 
\item $L_3$ is  after  $K_2$ and not later to $K_3$ in $\mathcal{O}$. 
\end{itemize}
By the induction hypothesis applied to $G'$, $H_{G'}$ has an exact hitting set such that the intervals corresponding to the vertices in $L_2 \setminus N[v_3]$ are hit at a point to the left of the zero-point of the gadget $DL_2$ in $H_{G'}$ and the intervals corresponding to vertices in $L_2 \cap N[v_3]$ are  hit by a point to the left of the zero-point of the gadget  $DL_3$ or by the intermediate point just before $DL_3$.  In particular, $v_2$ which is  in $L_2 \cap N[v_3]$ is hit by a point to the left of the zero-point in the gadget $DL_3$ or by the intermediate point just before $DL_3$.  Note that the intermediate point just before $DL_3$ exists since $DL_3$ is to the right of $D_2$.  Let this hitting set be denoted by $S$.  Let $h^* \in S$ be the point in the gadget $DL_2$ in $H_{G'}$.  Note that $DL_2$ is the leftmost gadget in $H_{G'}$. 

We first show that $h^*$ hits the intervals in $H_{G'}$ corresponding to the vertices in $K_1 \cap L_2$.  This is because, by the construction of $P$, for each vertex in $K_1$, the rightmost clique containing it is at or before $K_2$, and $L_3$ is after $K_2$.  Thus, the point in $H_{G'}$ that hits any interval corresponding to a vertex in $K_1 \cap L_2$ has to be in a gadget to the left of $D_2$.  If this point is different from $h^*$, then it would also hit $v_2$.   This contradicts the fact that in the exact hitting set $S$, $v_2$ is hit by a point to the left of the zero-point in the gadget $DL_3$ or to the intermediate point just before $DL_3$. Clearly, both these points are different from $h^*$ which is a point in the gadget $DL_2$ and we know that $DL_2$ is different from $DL_3$.   
%Since $L_2$ is the leftmost maximal clique in $\mathcal{O}$ which contains $v$, it follows that $K_1 \cap L_2 = K_1 \cap N[v]$.  

To construct the exact hitting set in $H_G$ from $S$, we define a point $h$ in $H_G$ as follows based on two cases:
\begin{itemize}
    \item The first case is when there is a vertex  such that the leftmost clique containing it is $L_2$ and the rightmost clique containing it is before $K_3$ (in $\mathcal{O})$. Among all such vertices, let $u$ be the vertex such that the rightmost clique containing it has the largest index in $\mathcal{O}$. By construction of $H_{G}$ the left endpoint of the interval associated with $u$ would have been the largest among all such vertices.  We take $h$ to be the left endpoint of the interval assigned to $u$ in $H_G$.
    \item In the second case there is no vertex such that the leftmost clique containing it is $L_2$ and the rightmost clique containing it is before $K_3$ (in $\mathcal{O})$. In this case, $h$ is the intermediate point between $DL_2$ and the preceding gadget in $H_G$.  Such a point exists, since $L_2$ is different from the first clique $K_1$, and by construction of $H_G$, the gadget for every clique in $\mathcal{O}$ except $K_1$ has a point to its left.
\end{itemize} 
Then $S \setminus \{h^*\}  \cup \{z_1 - w_1, h\}$ is a  hitting set of $H_G$, where $z_1 - w_1$ hits all the intervals in $B_1$.  $h$ hits all the intervals associated with vertices in $L_2$ whose rightmost endpoint is before $K_3$, and these are not hit by any other point in $S$, due to the induction hypothesis, and they are not hit by $z_1 - w_1$.  Further, the vertices in $B_1$ are not elements of $L_2$, and thus are hit only by $z_1 - w_1$.  Finally, the intervals associated with all the other vertices are hit exactly once by $S$ in gadgets different from the gadgets associated with $L_2$ and $K_1$, and these gadgets have the same structure in $H_{G'}$ as in $H_G$.  Therefore, $S \setminus \{h^*\}  \cup \{z_1 - w_1, h\}$ is an exact hitting set of $H_G$, and it also satisfies the two additional properties.

Next, we consider the case when $N[v_1]$ or $N[v_2]$ has a minimum clique cover of size 3. Then, since $p \geq 4$, we know that both $N[v_{p-1}]$ and $N[v_p]$ has a minimum clique cover of two cliques. Further, $v_{p-1}$ and $v_p$ are distinct from $v_1$ and $v_2$. From Observation \ref{obs:lastclique}, we know that $K_{\alpha'}$ is in the minimum clique cover of $N[v_p]$.  Consider $v_{p-1}$ in $P$ and we know that $D_{\alpha'-1}$ denotes the rightmost clique in $\mathcal{O}$ which contains $v_{p-1}$.  To prove the claim in this case, we consider the reversed maximal clique ordering $\mathcal{O}$, and the path obtained by executing Algorithm \ref{algo:constructP} on this ordering.  Due to the symmetry of the vertices chosen during the algorithm to be added to the path, it is clear that the path computed will be the reversal of $P$.  Also, we know that $L_{p-1} \leq K_{\alpha'-2} < L_p \leq K_{\alpha'-1} < K_{\alpha'}$.  We consider the argument for the previous case by using the reversal of $\mathcal{O}$, the canonical representation constructed using the reversal of $\mathcal{O}$, and the reversal of $P$. We compare the execution of Algorithm \ref{algo:constructP}  using the reversal of $\mathcal{O}$ and its execution using $\mathcal{O}$:  $K_{\alpha'}$ will be in the place of $K_1$, $K_{\alpha'-1}$  in the place of $L_2$, $L_p$  in the place of $K_2$, $K_{\alpha'-2}$ in the place of $L_3$, and $L_{p-1}$ in  place of $K_3$.  
 By an argument symmetric with the argument using the induction hypothesis for the previous case, using the canonical representation constructed using the reversal of $\mathcal{O}$, it follows that $S \setminus \{h^*\} \cup \{z_{\alpha'} + w_{\alpha'},h\}$ is an exact hitting set of $H_G$, and it satisfies the two additional properties. 
Hence the lemma.         
\end{proof} 

\noindent
We complete the proof of the  forbidden structure characterization for \textsf{EHIG}s. 
% re-stating Theorem 1
\forbidden*
\begin{proof}
From Section \ref{sec:canonRep}, we know that every interval graph $G$ has a unique canonical interval representation, which we denote by $H_G$. Furthermore, if $G$ does not have an $F \in {\cal F}$ as an induced subgraph, then by Lemma \ref{lem:charcomplete}, $H_G$ has an exact hitting set. 
    We have shown in Lemma \ref{lem:necFbIntGraph} that if $G$ has an exactly hittable interval representation, then $G$ does not have any $F \in {\cal F}$ as an induced subgraph. This proves the theorem.
 \end{proof}
 
\noindent
Using the above theorem, we prove Theorem \ref{thm:EHIHeqEHIG}. 
% re-stating Theorem 2
\equivalence*
\begin{proof}
Let the hypergraph $H_G$ constructed from interval graph $G$ has an exact hitting set. By Lemma \ref{lem:canonEqIntGraph}, the intersection graph $G'$ of $H_G$ is isomorphic to $G$. It follows that if $G'$ has an exactly hittable interval representation, then $G$ also has one. Thus, $G$ is exactly hittable.

To show the other direction, let $G$ be an $\ehig$. By Theorem \ref{thm:EHInterGraphs}, if $G$ is exactly hittable, then $G$ does not have any forbidden structure. Then, it follows from Lemma \ref{lem:charcomplete} that the canonical representation $H_G$ of $G$ has an exact hitting set.  
 \end{proof}

\subsection{Algorithm to recognize exactly hittable interval graphs}
\label{sec:algoEHIG}
In this section, we present an algorithm to recognize an exactly hittable interval graph. This algorithm makes use of the canonical interval representation in Section \ref{sec:canonRep} and the result by \cite{Dom2006} for MMSC problem (described in Section \ref{sec:GenFramework}) on interval hypergraphs. In their paper, Dom et al. showed that an integer linear programming (ILP) formulation, say $\mathcal{L}$, for MMSC problem on interval hypergraphs can be solved in polynomial time. The coefficients of inequalities in $\mathcal{L}$ results in a totally unimodular matrix and the polyhedron corresponding to $\mathcal{L}$ is an integer polyhedron. If the input instance to ILP is an exactly hittable instance, then the solution returned is 1. We use this algorithm below to test if a given interval hypergraph instance is exactly hittable.  \\

\noindent
\textbf{Algorithm} \texttt{isEHIG}: Given an interval graph $G$, construct the canonical interval representation as described in Section \ref{sec:canonRep}. Let $H_G$ be the resulting interval representation. Run MMSC algorithm by  \cite{Dom2006} on $H_G$ as input. If the algorithm returns value 1, then return \texttt{yes}. Else return \texttt{no}.

\begin{lemma}
Algorithm \texttt{isEHIG}($G$) outputs yes if and only if $G$ is exactly hittable in polynomial time.
\end{lemma}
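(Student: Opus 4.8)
The plan is to chain two equivalences and reduce the statement to a single property of Dom et al.'s routine. By Theorem~\ref{thm:EHIHeqEHIG}, the graph $G$ is exactly hittable if and only if its canonical representation $H_G$ is exactly hittable, so it suffices to show that Algorithm \texttt{isEHIG} outputs \texttt{yes} exactly when $H_G$ is exactly hittable. Since the algorithm outputs \texttt{yes} precisely when the MMSC routine returns the value $1$ on input $H_G$, the entire lemma collapses to the claim that the minimum-membership optimum of $H_G$ is $1$ if and only if $H_G$ admits an exact hitting set.

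First I would fix the MMSC instance associated with $H_G=(\mathcal{V},\mathcal{I})$. In the set-cover reading that underlies MMSC, selecting a point $p \in \mathcal{V}$ covers exactly those intervals of $\mathcal{I}$ that contain $p$, the universe to be covered is $\mathcal{I}$, and the \emph{membership} of an interval $I$ is the number of selected points lying inside $I$. Thus a feasible cover is nothing but a hitting set of $H_G$, and its membership profile counts, for each interval, how many times it is hit. Consequently a set $S \subseteq \mathcal{V}$ is an exact hitting set precisely when it is a feasible cover of maximum membership $1$: feasibility forces every interval to be hit at least once, and maximum membership $1$ forbids any interval from being hit twice.

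With this correspondence the two directions are immediate. If $H_G$ is exactly hittable with exact hitting set $S$, then $S$ is a feasible cover of maximum membership $1$; since every interval of $H_G$ is nonempty, no cover can achieve maximum membership $0$, so the minimum-membership optimum equals $1$ and Dom et al.'s algorithm returns $1$, whence \texttt{isEHIG} answers \texttt{yes}. Conversely, if \texttt{isEHIG} answers \texttt{yes} then the routine returned $1$, so there is a feasible cover of $H_G$ with maximum membership $1$, i.e.\ an exact hitting set, and $H_G$ is exactly hittable. Combining either direction with Theorem~\ref{thm:EHIHeqEHIG} yields the equivalence with $G$ being exactly hittable.

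The step I expect to be the main obstacle is to justify that Dom et al.'s polynomial-time machinery genuinely applies to this instance and that its returned value is exactly the minimum-membership optimum. The point--interval incidence matrix of $H_G$ has the consecutive-ones property along the line of points, so the ILP $\mathcal{L}$ has a totally unimodular constraint matrix and an integral polyhedron, which is exactly the regime in which Dom et al.\ solve MMSC exactly in polynomial time. Pinning down that the optimum value computed by $\mathcal{L}$ coincides with the maximum membership defined above --- and in particular that ``value $1$'' is read off correctly --- is the one place where the formulation must be matched carefully; once matched, correctness and the polynomial running time both follow directly.
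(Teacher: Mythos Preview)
Your proposal is correct and follows essentially the same approach as the paper: both arguments chain Theorem~\ref{thm:EHIHeqEHIG} (reducing the question about $G$ to one about $H_G$) with the correctness of Dom et al.'s MMSC algorithm on interval hypergraphs. The paper's own proof is a one-line citation of Lemma~\ref{lem:canonEqIntGraph}, Theorem~\ref{thm:EHIHeqEHIG}, and the correctness of Dom et al., whereas you spell out explicitly why an MMSC optimum of $1$ is equivalent to the existence of an exact hitting set and why the totally unimodular structure guarantees that the returned value is indeed the integer optimum; this extra care is sound and arguably clarifies a point the paper leaves implicit.
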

\begin{proof}
The proof follows from Lemma \ref{lem:canonEqIntGraph}, Theorem \ref{thm:EHIHeqEHIG} and the correctness of algorithm for MMSC problem on interval hypergraphs.
\end{proof}

\noindent
It is also clear that the inductive argument in Lemma \ref{lem:charcomplete} can be converted into a polynomial time combinatorial algorithm to check if $H_G$ has an exact hitting set. This leverages the fact that a minimum clique cover of a perfect graph can be computed in polynomial time.

\subsection{Proper Interval Graphs is a subclass of \textsf{EHIG}}
\label{subsec:PropSubSetEHIG}
We now recall and complete the proof of Theorem \ref{thm:Hierarchy}.  
\hierarchy*
% \begin{lemma} \label{lem:PropEH}
% The set of Proper Interval Graphs are strictly contained in the set of $\ehig$.
% \end{lemma}  
\begin{proof}
Let $G$ be a proper interval graph and let it be the intersection graph of the interval hypergraph $H = (\mathcal{V},\mathcal{I})$  in which no interval properly contains another. Since $H$ is a proper interval hypergraph, no two intervals in $\mathcal{I}$ can have the same left endpoint. Hence order intervals in $\mathcal{I}$ according to increasing order of their left endpoints. Let this ordering be $I_1 < I_2 < \ldots < I_m$. Add $r(I_1)$ (which is the smallest right endpoint among all intervals) to set $S$. Remove all intervals hit by $r(I_1)$. Recurse on the remaining set of intervals until all the intervals are hit by $S$. Clearly, $S$ is an exact hitting set.

To show the strict containment, we show that the graph $K_{1,3}$  which is a forbidden structure (\cite{roberts1978graph}) for Proper Interval Graphs has an exactly hittable interval representation.  
Let the vertices of the $K_{1,3}$ be $\{u,a,b,c\}$ and edges be $\{(u,a),(u,b),(u,c)\}$. The intervals assigned to the vertices $a,b,c$ and $u$ are shown in Fig. \ref{fig:K13}. Hence the lemma.
\end{proof}

\begin{figure}
\centering
\begin{tikzpicture}[scale=0.8, every node/.style={scale=0.8}]
\draw(0,0)--(1,0.5);
\draw(0,0)--(1,0);
\draw(0,0)--(1,-0.5);
\draw(-0.2,0) node{$u$}; 
\draw(1.2,0.5) node{$a$}; 
\draw(1.2,0) node{$b$}; 
\draw(1.2,-0.5) node{$c$}; 
 
\draw[radius=0.3mm, color=black, fill=black](0,0) circle; 
\draw[radius=0.3mm, color=black, fill=black](1,0.5) circle;
\draw[radius=0.3mm, color=black, fill=black](1,0) circle; 
\draw[radius=0.3mm, color=black, fill=black](1,-0.5) circle; 

\draw(3,0)--(4,0);
\draw(3.5,0.2) node{$a$}; 
\draw[radius=0.3mm, color=black, fill=black](3,0) circle; 
\draw[radius=0.3mm, color=black, fill=black](4,0) circle; 
\draw[radius=0.3mm, color=black, fill=black](5,0) circle node[above=0.02cm]{$b$};
\draw(6,0)--(7,0) (6.5,0.2) node{$c$};  
\draw[radius=0.3mm, color=black, fill=black](6,0) circle; 
\draw[radius=0.3mm, color=black, fill=black](7,0) circle; 
\draw(4,1)--(6,1) (5,1.2) node{$u$};  
\draw[radius=0.3mm, color=black, fill=black](4,1) circle; 
\draw[radius=0.3mm, color=black, fill=black](6,1) circle; 

\draw[radius=0.3mm, color=black, fill=black](3,-0.5) circle node[below=0.02cm] {$\textcolor{black}{1}$};
\draw (3,-0.5) -- (7,-0.5);
\draw[radius=0.3mm, color=black, fill=black](4,-0.5) circle node[below=0.02cm] {$\textcolor{black}{2}$};
\draw[radius=0.3mm, color=black, fill=black](5,-0.5) circle node[below=0.02cm] {$\textcolor{black}{3}$};
\draw[radius=0.3mm, color=black, fill=black](6,-0.5) circle node[below=0.02cm] {$\textcolor{black}{4}$};
\draw[radius=0.3mm, color=black, fill=black](7,-0.5) circle node[below=0.02cm] {$\textcolor{black}{5}$};
\end{tikzpicture}
\caption{Exactly hittable interval representation of $K_{1,3}$ (here, \{1,3,5\} is an exact hitting set)}
\label{fig:K13}
\end{figure}
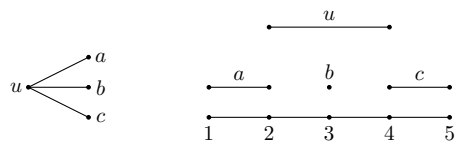

\noindent
%This completes the proof of Theorem \ref{thm:Hierarchy}.  
%\input{input/gadgetAndMinCliqueCover.tex}
\section{Discussion}
Our results indicate that there is an interesting hierarchy among the class of interval graphs based on the number of times an interval is hit by a hitting set.  We have shown that proper interval graphs have an exactly hittable interval representation. Further it is a strict subclass of the set of $\ehig$s. The natural question is to characterize interval graphs which have a representation such that each interval is hit at most $k$ times by a hitting set.  We also believe that the recognition problem for such graphs is fundamental and interesting.

\acknowledgements{
We thank the anonymous reviewers for providing insightful comments on one of the gaps in the crucial proof of characterization. We would also like to thank the organizers of ICGT 2022 for their support during the review process. }

\nocite{*}
\bibliographystyle{abbrvnat}
\bibliography{cfc}

\end{document}